\documentclass[twocolumn]{autart}    
\usepackage{graphicx}
\usepackage[nodisplayskipstretch]{setspace}
\usepackage[nodisplayskipstretch]{setspace}
\usepackage{amsfonts,latexsym,
amssymb,
amsmath,url,stackengine}
\usepackage{booktabs}
\usepackage{amsmath}
\usepackage{picins}
\usepackage{color}
\usepackage{empheq}

\catcode`\@=11
\def\downparenfill{$\m@th\braceld\leaders\vrule\hfill\bracerd$}
\def\downparenfill{$\m@th\braceld\leaders\vrule\hfill\bracerd$}
\def\overparen#1{\mskip 2mu\mathop{\vbox{\ialign{##\crcr\crcr \noalign{\kern0.4ex}
\downparenfill\crcr\noalign{\kern0.4ex\nointerlineskip}
$\hfil\displaystyle{#1}\hfil$\crcr}}}\limits\mskip 2mu} 
\catcode`\@=12

\usepackage{rgsMacros}
\usepackage{subcaption}
\usepackage{bm}

\usepackage{enumitem}
\usepackage{balance}
\usepackage{comment}

\definecolor{olivegreen}{rgb}{0,0.5,0}

\newtheorem{thmm}{Theorem}
\newtheorem{propo}{Proposition}
\newtheorem{lemm}{Lemma}
\newtheorem{exe}{Example}
\newtheorem{corol}{Corollary}
\newtheorem{ass}{Assumption}
\newtheorem{prope}{Property}
\newtheorem{defin}{Definition}
\newtheorem{remm}{Remark}

\newenvironment{lemma}{\begin{lemm}}{\hfill \end{lemm}}

\newenvironment{remark}{\begin{remm} \rm}{\hfill \end{remm}}

\newenvironment{theorem}{\begin{thmm}}{\hfill $\square$ \end{thmm}}

\newenvironment{proof}{{\it Proof. }}{\hfill $\blacksquare$ }

\newtheorem{dwellt}{Condition}

\newtheorem{feedback}{Feedback}

\newtheorem{adapt_alg}{Algorithm}

\newif\ifitsdraft

\makeatletter
\def\@citex[#1]#2{%
  \let\@citea\@empty
  \@cite{\@for\@citeb:=#2\do
    {\@citea\def\@citea{,\penalty\@m\hskip.1em}%
     \edef\@citeb{\expandafter\@firstofone\@citeb\@empty}%
     \if@filesw\immediate\write\@auxout{\string\citation{\@citeb}}\fi
     \@ifundefined{b@\@citeb}{\mbox{\reset@font\bfseries ?}%
        \G@refundefinedtrue
        \@latex@warning
          {Citation `\@citeb' on page \thepage \space undefined}}%
        {\hbox{\csname b@\@citeb\endcsname}}}}{#1}}
\makeatother
 
\begin{document}

\begin{frontmatter}
\title{Backstepping Observer for the Quasilinear Heat\\ Equation with Linear Design Gains: Beyond Local Stability} 

\author{M. C. Belhadjoudja}\ead{m2camilb@uwaterloo.ca}, 
\author{K. A. Morris}\ead{kmorris@uwaterloo.ca} 

\address{Department of Applied Mathematics, University of Waterloo, 200 University Avenue West, Waterloo, ON, Canada, N2L 3G1}

\begin{keyword}    
Observer design; nonlinear systems; backstepping; Lyapunov analysis, perturbation method.
\end{keyword}

\begin{abstract}
\textit{This is a working document of a work in progress.} It has long been of interest to understand when control and observer-design methodologies developed for linear dynamical systems can provide local guarantees when applied to nonlinear systems. However, beyond establishing local stability, it is also important to characterize the region of attraction and performance in terms of system parameters and design gains—something that linearization-based approaches cannot achieve. For instance, although increasing an observer gain improves the convergence rate in the linear setting (as in backstepping design), the relationship between this gain and the convergence rate in the nonlinear setting need not be monotonic; understanding these dependencies is therefore crucial for reliable operation. In light of this discussion, we consider the one-dimensional quasilinear heat equation with state-dependent heat capacity and thermal conductivity, and design a boundary-output observer based on the backstepping design for a linear heat equation with constant coefficients. Viewing the quasilinear system as a perturbation of the linear one, we establish exponential stability of the origin for the observation error dynamics in $H^1$, with an explicit region of attraction depending on the system parameters, observer gains, and the mismatch between the nonlinear diffusivity and the constant design diffusivity. Importantly, the observation error converges to zero rather than merely to a neighborhood scaling with this mismatch, even though, in contrast to backstepping-based stabilization of nonlinear PDEs, the mismatch need not decay along trajectories and may remain bounded away from zero, acting as a persistent state-dependent multiplicative perturbation. A technical challenge was to perform a sufficiently-fine Lyapunov analysis that does not yield overly conservative results such as mere boundedness of the observation error. Interestingly, while in the linear case the relationship between one of the backstepping observer gains and the convergence rate is monotonic, we show that in the nonlinear setting this is no longer the case: there may exist an optimal value of that gain, beyond which further increases deteriorate the system's performance. Such behavior cannot be predicted without our analysis: one might expect \textit{a priori} the decay rate to be freely tunable at the expense of a region of attraction that shrinks to zero as the prescribed rate tends to infinity. However, our Lyapunov analysis (supported by numerical experiments) reveals that this intuition is incorrect.
\end{abstract}

\end{frontmatter}

\setlength{\abovedisplayskip}{4pt}
\setlength{\belowdisplayskip}{4pt}
\setlength{\abovedisplayshortskip}{2pt}
\setlength{\belowdisplayshortskip}{2pt}

\section{Introduction} \label{Introduction}

\subsection{Physical motivation}

Heat diffusion in a nonlinear medium occupying a bounded open set 
$\Omega \subset \mathbb{R}^n$, with $n \in \{1,2,3\}$, can be modeled by the quasilinear heat equation
\begin{align}
c(T) T_t = \nabla \cdot (\kappa(T)\nabla T), \label{1}
\end{align}
supplemented with appropriate boundary and initial conditions \cite[Section 8.8.3]{rou}. 
Here, $T:\overline{\Omega}\times[0,+\infty)\to \mathbb{R}$ denotes the temperature with $\bar{\Omega}$ being the closure of $\Omega$, 
$c:\mathbb{R}\to \mathbb{R}_{>0}$ is the heat capacity, 
$\kappa:\mathbb{R}\to \mathbb{R}_{>0}$ is the thermal conductivity, and $\nabla$ denotes the gradient operator. For example, when modeling heat transfer in high-temperature plasmas via radiation, equation \eqref{1} may be used with 
$c$ constant and $\kappa(T)=T^{\alpha}$ for some $\alpha>1$ \cite[Pages 652--684]{plasma}. 
Another example arises in heat diffusion in phase-change materials (PCM)s, such as paraffin or fatty acids, where 
$\kappa$ is constant while $c$ varies sharply near the transition temperature to account for the latent heat absorbed or released during melting or solidification \cite{PCM1}. Equation \eqref{1} also appears in other contexts, including the modeling of gas density in a porous medium \cite{porous1}, the filtration of an incompressible fluid through a porous stratum \cite{boussinesq}, and the density of biological species in a habitat \cite[Section 2.4]{vazquez}. 
Additional applications, such as coagulation-fragmentation processes, magma propagation in volcanoes, and the evolution of immiscible fluids, are discussed in \cite[Chapter 21]{vazquez}. Beyond modeling the aforementioned physical phenomena, equation \eqref{1} also serves as a fundamental building block in several engineering systems widely used in practice. 
For instance, models of PCM heat exchangers couple \eqref{1} with hyperbolic/parabolic equations \cite{manu}, while models of lithium-ion batteries couple \eqref{1} with elliptic equations \cite{kirsten1}.

In the applications described above, it is generally impossible to measure the state at every spatial location all the time. This is due to the restricted sensing range of measurement devices relative to the size of the spatial domain, as well as from cost and feasibility constraints associated with deploying dense networks of continuously active sensors. Consequently, for tasks such as monitoring and feedback control, one must rely on limited outputs. Hence, this paper focuses on the problem of observer design for \eqref{1} using boundary outputs, namely outputs defined by the state values at the boundary of the spatial domain, which commonly arise in practical applications.

\subsection{Literature on \eqref{1} with non-constant coefficients}
Given the ubiquity of \eqref{1} in applications, significant effort has been devoted to fundamental questions such as the existence and uniqueness of solutions in various functional spaces, and their long-time behavior \cite{vazquez}. 
In contrast, for problems in control theory such as controller and observer design, very few works address \eqref{1} with non-constant $c$ or $\kappa$. Specifically, \cite{camil_finite_1,camil_finite_2} studied boundary control for stabilization of the origin $\{T=0\}$ for \eqref{1} in one dimension, allowing $c$ to be non-constant while keeping $\kappa$ constant. However, the methodology proposed in these works does not permit tuning of the convergence rate of the state towards the equilibrium. Consequently, even if the aforementioned methodology was extended to solve the observer-design problem, the resulting convergence rate would not be adjustable. On the other hand, \cite{hugo} designed a finite-dimensional output-feedback boundary controller for \eqref{1} in one dimension. Their method allows $\kappa$ to depend on the state and to vanish when the state is zero, resulting in a degenerate parabolic equation. Because of this degeneracy, the authors provided stability guarantees only in the $H^{-1}$ norm. This limitation makes it difficult to handle practically relevant cases in which $\kappa$ does not satisfy a linear growth condition. In \cite{semiconductor}, observer design for \eqref{1} was considered in the context of semiconductor wafer processing, but under the assumption that the thermal diffusivity $\kappa/c$ is constant. In \cite{quasi_1}, observer design for \eqref{1}, in one dimension and with periodic boundary conditions was investigated via system linearization. However, while linearization-based techniques can be effective for certain classes of PDEs, including parabolic ones \cite{kirsten3}, they generally do not provide information on the region of attraction. More fundamentally, linearization erases the nonlinear structure that governs how performance degrades as design parameters vary, so that even qualitative questions --- such as whether increasing an observer gain always improves convergence --- cannot be addressed within that framework. Finally, although controllability properties of \eqref{1} have been studied in \cite{contr1,contr2}, these results do not directly yield procedures for controller or observer design.

\subsection{Contributions}

When both $c$ and $\kappa$ are constant, equation \eqref{1} reduces to the classical linear heat equation, one of the most extensively studied PDEs in control theory, for which a wide range of observer design methods is available. It is therefore natural to ask whether observer-design techniques developed for the linear heat equation can be extended to the quasilinear case by treating the nonlinearity, roughly speaking, as a \textit{state-dependent perturbation}, and how the resulting region of attraction depends on system parameters and observer gains. This perturbation approach has proven effective for the stabilization of one-dimensional quasilinear hyperbolic equations \cite{coron} and semilinear parabolic equations \cite{semi,hugo_ks}. However, to the best of our knowledge, this approach has not been applied to quasilinear parabolic equations such as \eqref{1} with general $c$ and $\kappa$. Moreover, the aforementioned works focused on establishing local stability properties without discussing how the region of attraction and performance depend on system parameters. Yet it is precisely this dependence that a practitioner needs in order to tune an observer: knowing that a region of attraction exists is of limited use if one cannot determine how it shrinks or expands as the gains are varied, and how the convergence rate of the observer depends on those gains. Additionally, these works focused on the problem of stabilization. For the problem of observer design in the absence of a feedback controller, distinct challenges arise. For instance, when applying a linear controller to a nonlinear system with the goal of stabilizing an equilibrium, there is usually a mismatch between the nonlinearity of the system and some of the controller gains. In the stabilization problem, as the state converges to zero, this mismatch also vanishes. For the observer-design problem, this is not the case, and one must carefully select how to handle the nonlinear terms in a Lyapunov-based analysis, among the various possibilities, in order to avoid obtaining only an input-to-state stability result with respect to that mismatch. The objective of the present work is to address this gap by demonstrating that the linear backstepping method in \cite{backstepping} can be employed to design an observer for \eqref{1}, while explicitly estimating a region of attraction, and characterizing the relationship between the observer gains, the region of attraction, and the performance of the observer.

As is customary in perturbation-based methods, since the nonlinearity is treated as a state-dependent perturbation and is not assumed to satisfy a linear growth condition or global Lipschitz continuity assumption, it becomes necessary to derive suitable upperbounds on the max norm of the observation error to study the robustness with respect to that perturbation. In one dimension, this requires estimating the $H^1$ norm of the observation error, whereas in dimensions two and three the $H^2$ norm would be required. Consequently, the one-dimensional case must be treated separately from higher-dimensional cases. For this reason, the present work focuses on the one-dimensional quasilinear heat equation
\begin{empheq}[left=\Sigma_T: \left\{,right=\right.]{align}
&c(T)T_t   = \left(\kappa(T)T_x\right)_x, \label{one}\\
&T_x(0,t)= T_x(1,t) = 0, \label{two}\\
&T(x,0)= T_o(x),\label{three}
\end{empheq}
where $x\in [0,1]$ denotes the space variable, $t\ge 0$ denotes the time variable, 
$T:[0,1]\times [0,+\infty)\to \mathbb{R}$ is the state, and 
$c,\kappa\in \mathcal{C}^2(\mathbb{R};\mathbb{R}_{>0})$. 
The initial condition $T_o\in \mathcal{H}^{2+\beta}[0,1]$, for some $\beta\in (0,1)$, satisfies the usual $0^{th}$-order compatibility condition
\begin{align}
T_o'(0)=T_o'(1)=0. \label{compat_1}
\end{align}
Under these assumptions, $\Sigma_T$ admits a unique forward-complete classical solution \cite[Page 27]{vazquez}, \cite{ladyzen}.
\begin{defin}
A forward-complete classical solution to $\Sigma_T$ is any map $T \in \mathcal{H}^{2+\beta,1+\beta/2}_{loc}([0,1]\times [0,+\infty))$ that verifies \eqref{one} for all $(x,t)\in (0,1)\times (0,+\infty)$, \eqref{two} for all $t\in [0,+\infty)$, and \eqref{three} for all $x\in [0,1]$. 
\hfill $\bullet$
\end{defin}
Moreover, the solution is uniformly bounded in the $\mathcal{C}^2$ norm, i.e., there exists $0\leq M_T < +\infty$ such that
\begin{align}
\max \{ \|T\|_{\infty},\|T_x\|_{\infty},\|T_{xx}\|_{\infty}\} \leq M_T. \label{M_T_bound}
\end{align}

The objective is to design a state observer that reconstructs asymptotically $T$ in the sense of the max norm, using the knowledge of $c$, $\kappa$, and the boundary outputs $T(1,t)$ for all $t\geq 0$. Before introducing the observer, we simplify $\Sigma_T$ using the enthalpy transformation \cite[Section 8.8.3]{rou}
\begin{align*}
\hat{c}(r) := \int_{0}^{r} c(\rho)\, d\rho \qquad r \in \mathbb{R},
\end{align*}
and define the new state variable
\begin{align}
v := \hat{c}(T). \label{v_def}
\end{align}
Since $c>0$, $\hat{c}$ is strictly increasing and hence invertible. 
One readily verifies that $T$ solves $\Sigma_T$ if and only if $v$ solves
\begin{empheq}[left=\Sigma: \left\{,right=\right.]{align}
&v_t = \left( \alpha(v) v_x \right)_x, \label{sigma_1}\\
&v_x(0,t) = v_x(1,t) = 0, \label{sigma_2}\\
&v(x,0) = v_o(x), \label{sigma_3}
\end{empheq}
where $v_o := \hat{c}(T_o)$ and $\alpha := (\kappa \circ \hat{c}^{-1})/(c \circ \hat{c}^{-1})$. Since $c,\kappa \in \mathcal{C}^2(\mathbb{R};\mathbb{R}_{>0})$, we have $\alpha \in \mathcal{C}^2(\mathbb{R};\mathbb{R}_{>0})$. 
Moreover, because $T(1,t)$ is known, the transformed output
\begin{align*}
y(t) := v(1,t)
\end{align*}
is also known. Hence, observer design for $\Sigma_T$ using the boundary outputs $T(1,t)$ for all $t\geq 0$ reduces to observer design for $\Sigma$ using the boundary outputs $v(1,t)$ for all $t\geq 0$. Furthermore, if an estimate $\hat{v}$ of $v$ satisfies
\begin{align*}
|v(\cdot,t) - \hat{v}(\cdot,t)|_{\infty} \le M e^{-\sigma^* t} \quad \forall t \ge 0,
\end{align*}
for some constants $M \ge 1$ and $\sigma^*>0$, then by setting
\begin{align*}
\hat{T} := \hat{c}^{-1}(\hat{v}),
\end{align*}
we obtain
\begin{align*}
|T(\cdot,t) - \hat{T}(\cdot,t)|_{\infty} 
&= |\hat{c}^{-1}(v(\cdot,t)) - \hat{c}^{-1}(\hat{v}(\cdot,t))|_{\infty} \nonumber \\
&\leq L_{\hat{c}^{-1}}(M) |v(\cdot,t) - \hat{v}(\cdot,t)|_{\infty}
\end{align*}
\begin{align*}
&\leq L_{\hat{c}^{-1}}(M) M e^{-\sigma^* t} \quad \forall t \geq 0,
\end{align*}
where $L_{\hat{c}^{-1}}(M) \ge 0$ is the Lipschitz constant of $\hat{c}^{-1}$ on the interval $[-M,M]$. Note that if $c$ is bounded from below by a positive constant, then $\hat{c}^{-1}$ is globally Lipschitz continuous, and $L_{\hat{c}^{-1}}$ becomes independent of $M$. 
Therefore, to estimate $T$ in the sense of the max norm, it suffices to estimate $v$ and know $\hat{c}^{-1}$, which can be computed explicitly for certain choices of $c$ or approximated numerically.

The above observer-design problem has been solved via the backstepping approach when $\Sigma$ is linear \cite{backstepping}. Within this approach, one considers the linear heat equation
\begin{equation*}
\Sigma_{L}:\left\lbrace
\begin{aligned}
&v_t = a v_{xx}, \\
&v_x(0,t)=v_x(1,t)=0, \\
&v(x,0)=v_o(x),
\end{aligned}
\right.
\end{equation*}
where $a>0$ is constant. Then, the backstepping observer for $\Sigma_L$ is defined as
\begin{equation*}
\hat{\Sigma}_L : \left\lbrace 
\begin{aligned}
&\hat{v}_t = a\hat{v}_{xx} + p_1(x)\left(y(t)-\hat{y}(t)\right), \\
&\hat{v}_x(0,t) = 0, \\
&\hat{v}_x(1,t) = p_{10}\left(y(t)-\hat{y}(t)\right), \\
&\hat{v}(x,0) = \hat{v}_o(x),
\end{aligned}
\right.
\end{equation*}
where $\hat{v}:[0,1]\times [0,+\infty)\to \mathbb{R}$ is the observer's state, $\hat{y}(t):=\hat{v}(1,t)$ for all $t\geq 0$, and $\hat{v}_o\in \mathcal{H}^{2+\beta}[0,1]$ is the observer's initial condition that is required to verify the $0^{th}$-order compatibility condition 
\begin{align}
\hat{v}'_o(0) = 0, \quad \hat{v}'_o(1) = p_{10}\left(y(0)-\hat{y}(0)\right). \label{compat2}
\end{align}
Furthermore, $(p_1,p_{10})$ are given by
\begin{align}
p_1(x) &:= -a p_y(x,1) \quad \forall x\in (0,1), \label{p1}\\
p_{10} &:= p(1,1), \label{p10}
\end{align}
where $p:\mathcal{T}\to \mathbb{R}$, with $\mathcal{T} := \{(x,y)\in [0,1]^2 : 0\leq y\leq x\leq 1\}$, is the backstepping kernel verifying
\begin{align}
&a\left(p_{yy}(x,y) - p_{xx}(x,y)\right)-\sigma p(x,y) = 0 
\quad \forall (x,y)\in \mathrm{int}(\mathcal{T}), \nonumber \\
&p(x,x) = -\frac{\sigma}{2a} x \quad \forall x\in (0,1), \label{kernel_eqs} \\
&p_x(0,y) = 0 \quad \forall y\in (0,1), \nonumber
\end{align}
and $\sigma>0$ is a free design parameter that tunes the convergence rate of the observer. In particular, since $a$ and $\sigma$ are constants, then $p$ is obtained explicitly as 
\begin{align}
p(x,y) &= -\frac{\sigma}{a} x \frac{I_1\left(\sqrt{\frac{\sigma}{a} (y^2-x^2)}\right)}{\sqrt{\frac{\sigma}{a} (y^2-x^2)}}, \label{kernel_1}\\
I_1(s) &:= \sum_{m=0}^{+\infty} \frac{(s/2)^{1+2m}}{m!(m+1)!}.\label{kernel_2}
\end{align}
Inspired by the linear case, we propose in this paper to design the observer for the quasilinear system $\Sigma$ as
\begin{equation}
\hat{\Sigma} : \left\lbrace 
\begin{aligned}
&\hat{v}_t = \left(\alpha (\hat{v})\hat{v}_x\right)_x + p_1(x)\left(y(t)-\hat{y}(t)\right), \\
&\hat{v}_x(0,t) = 0, \\
&\hat{v}_x(1,t) = p_{10}\left(y(t)-\hat{y}(t)\right), \\
&\hat{v}(x,0) = \hat{v}_o(x),
\end{aligned}
\right. \label{observer_def}
\end{equation}
where $p_{1}$ and $p_{10}$ are as above, for a choice of the constants $a,\sigma>0$. Then, we introduce the observation error 
\begin{align}
\tilde{v}:=v-\hat{v}, \label{obs_error_def}
\end{align}
and the initial observation error 
\begin{align}
\tilde{v}_o := v_o - \hat{v}_o.
\end{align}
Note that, due to the compatibility conditions \eqref{compat_1} and \eqref{compat2}, the initial observation error verifies 
\begin{align}
\tilde{v}_o'(0) = 0, \quad \tilde{v}_o'(1) = -p_{10}\tilde{v}_o(1).\label{compat3}
\end{align}
Given \eqref{compat3} and the regularity of $y$, $\alpha$, and the kernel $p$, it can be shown that $\hat{\Sigma}$ admits a unique forward-complete classical solution by treating $p_1(x)y(t)$ and $p_{10}y(t)$ as perturbations and invoking, e.g., the results in \cite{ladyzen}. As well-posedness follows from standard arguments, the details are omitted due to space constraints.

To characterize the region of attraction towards the equilibrium $\{\tilde{v}=0\}$, we introduce, for each $\omega>0$, the set of initial observation errors 
\begin{align*}
\tilde{\mathcal{V}}_o(\omega) := \{\tilde{v}_o\in \mathcal{H}^{2+\beta}[0,1]: |\tilde{v}_o|_{H^1} \leq \omega \ \text{and \eqref{compat3} holds}\}. 
\end{align*}
The main goal of this paper is to determine a constant $\omega^*>0$, depending only on the system parameters and observer gains $a$ and $\sigma$, such that the the origin $\{\tilde{v}=0\}$ is exponentially stable in the $H^1$ norm with $\tilde{\mathcal{V}}_o(\omega^*)$ being a region of attraction. Furthermore, we characterize how $\omega^*$ and the convergence rate of the observation error to zero depend on the observer gains. In particular, the $H^1$ exponential stability result implies the exponential convergence of the max norm of the observation error to zero, for all initial observation errors $\tilde{v}_o\in \tilde{\mathcal{V}}_o(\omega^*)$. Our results are obtained by analyzing the variations of a suitable Lyapunov functional candidate. A technical challenge, compared to the linear case, is that in the latter, one can directly deduce $L^2$ exponential stability of $\{\tilde{v}=0\}$ by using the squared $L^2$ norm of the observation error as a Lyapunov functional candidate. In our case, however, due to the nonlinearity, the time derivative of the squared $L^2$ norm of the observation error involves the max norm of the observation error, and thus one cannot directly deduce any stability result. For this reason, we augment the squared $L^2$ norm of the observation error with the squared $L^2$ norm of the spatial derivative of the observation error, since the $H^1$ norm of the observation error can be used to upper bound the max norm. This leads to a differential inequality involving several functional norms, which we then analyze to conclude $H^1$ exponential stability.

Our Lyapunov-based approach is inspired by \cite{camil_finite_1}, where the so-called Robin controllers are used to stabilize the origin for \eqref{1} in the $H^1$ norm when $\kappa$ is constant, and by \cite{castilo} where a boundary observer is designed for quasilinear hyperbolic systems. However, beyond the fact that we address a completely different problem and that the aforementioned references do not consider backstepping-based designs, there is an important technical distinction between \cite{camil_finite_1,castilo} and the present work. For instance, Robin controllers do not involve the diffusivity as a gain, whereas the backstepping controller/observer involves the parameter $a$, which plays the role of constant diffusivity in the linear case. Since in our setting the diffusivity $\alpha$ is not constant, there is a mismatch between $\alpha$ and $a$ that must be carefully accounted for in the Lyapunov analysis. A central technical challenge compared to \cite{camil_finite_1} is therefore to handle this mismatch in order to still obtain an exponential stability result, rather than just practical stability. We achieve this at the expense of having the radius of the region of attraction $\omega^*$ depend on the mismatch between $\alpha$ and $a$. Additionally, in contrast to \cite{camil_finite_1,castilo}, where the convergence rate does not play a role in the analysis and is not tunable, here the convergence rate is tunable (although not freely) and we show how to tune it and how it affects $\omega^*$. In the linear setting, the convergence rate corresponds to the observer gain $\sigma$, whereas in our case, the effective convergence rate is given by $\sigma$ minus a term proportional to the mismatch between $\alpha$ and $a$, and to $\omega^*$. Understanding this interplay between the tunable decay rate, the diffusivity mismatch, and the region of attraction is a central aspect of our analysis. A notable consequence is that the effective convergence rate is not monotone in $\sigma$: increasing $\sigma$ beyond an optimal value causes the backstepping kernel norms to grow, which amplifies the effect of the nonlinear perturbation and ultimately shrinks the region of attraction and degrades the convergence rate. This phenomenon, which has no counterpart in the linear setting, is predicted by our theoretical bounds and confirmed by numerical simulations.

A preliminary version of this work has been submitted to the 2026 IEEE Conference on Decision and Control, where the proofs are omitted. Furthermore, the current version includes a deeper discussion of the results and their consequences, a more detailed comparison to the existing literature, and new numerical simulations.

\subsection{Organization}
The remainder of the paper is organized as follows. In Section \ref{sec_main}, we state and discuss our main result. In Section \ref{sec_proof}, we prove our main result. Simulation results are provided in Section \ref{sec_simu}, and the paper concludes with a discussion of future research directions in Section \ref{sec_conclusion}.

\subsection{Notation}

Given $v : [0,1] \to \mathbb{R}$, we let $|v|_{\infty} := \max_{x \in [0,1]}|v(x)|$ and $|v|_{L^2}^2 := \int_{0}^1v(x)^2dx$. We also let $|v|_{H^1}^2 := |v|_{L^2}^2+|v'|_{L^2}^2$, where $v'$ denotes the derivative of $v$. Similarly, given
$v : [0,1] \times \mathbb{R}_{\geq 0}:=[0,+\infty) \to \mathbb{R}$ and $t>0$, 
 we let $|v(\cdot,t)|_{\infty}:=\max_{x \in [0,1]}|v(x,t)|$ and $|v(\cdot,t)|_{L^2}^2:=\int_{0}^{1}v(x,t)^2dx$. We also write $|v|_{\infty}$ and $|v|_{L^2}$ to mean the map $t \mapsto |v(\cdot,t)|_{\infty}, |v(\cdot,t)|_{L^2}$.
Furthermore, we denote by $v_x$ the partial derivative of $v$ with respect to $x$, $v_{xx}$ the second partial derivative of $v$ with respect to $x$, and $v_t$ the partial derivative of $v$ with respect to $t$. Given $\beta \in (0,1)$, we denote by $\mathcal{H}^{2+\beta}[0,1]$ 
the space of twice continuously-differentiable maps $v: [0,1] \rightarrow \mathbb{R}$ such that $v''$, the second-order derivative of $v$, is H\"{o}lder continuous with exponent $\beta$, i.e., there exists $C>0$ such that 
$ |v''(x_1)-v''(x_2)| \leq C |x_1-x_2|^{\beta}$ for all $x_1,x_2 \in [0,1]$.  
We denote by $\mathcal{H}_{loc}^{2+\beta,1+\beta/2}([0,1]\times [0,+\infty))$ the space of maps $v :[0,1] \times [0,+\infty) \to \mathbb{R}$ such that the restriction of $v$ to any bounded subset $[0,1] \times L \times [0,+\infty)$ belongs to $\mathcal{H}^{2+\beta,1+\beta/2}([0,1] \times L)$, the space of maps $f: [0,1]\times L\to \mathbb{R}$ such that $f_{xx}$ and $f_t$
are H\"older continuous in $x$ with exponent $\beta$ uniformly in $t$
and H\"older continuous in $t$ with exponent $\beta/2$ uniformly
in $x$. Given $s\geq 0$ and $g\in \mathcal{C}^1(\mathbb{R};\mathbb{R})$, we denote by $L_g(s)$ the Lipschitz constant of $g$ on $[-s,s]$, which exists since $g$ is of class $\mathcal{C}^1$. Finally, given a map $v:[0,1]\times [0,+\infty)\to \mathbb{R}$, we denote $\|v\|_{\infty} := \sup_{(x,t)\in [0,1]\times [0,+\infty)} |v(x,t)|$.

\section{Main Result} \label{sec_main}

\subsection{Diffusivity mismatch and definitions}

As outlined in the Introduction, our observer $\hat{\Sigma}$ in \eqref{observer_def} employs the backstepping gains $(p_1,p_{10})$ designed for the linear heat equation $\Sigma_L$ with constant diffusivity $a>0$, while the system $\Sigma$ has a state-dependent diffusivity $\alpha(v)$. We quantify this discrepancy through the \textit{diffusivity mismatch}
\begin{align}
\bar{\alpha}(r) := \alpha(r)-a \qquad r\in \mathbb{R}. \label{mismatch_def}
\end{align}
In the linear case, $\alpha\equiv a$, i.e., $\bar{\alpha}\equiv 0$, and the observation error dynamics reduce to a target system that is globally exponentially stable at rate $\sigma$. In our quasilinear setting, the nonzero $\bar{\alpha}$ acts as a state-dependent perturbation of this target system. The magnitude of this perturbation is measured by two quantities:
\begin{align}
\delta_1 &:= \|\bar{\alpha}(v)\|_{\infty} = \sup_{(x,t)\in [0,1]\times [0,+\infty)} |\alpha(v(x,t))-a|, \label{delta1_def} \\
\delta_2 &:= \|\bar{\alpha}(v(1,\cdot))\|_{\infty} = \sup_{t\in [0,+\infty)}|\alpha(v(1,t))-a|. \label{delta2_def}
\end{align}
The quantities $\delta_1$ and $\delta_2$ depend on the solution $v$ of $\Sigma$ to be estimated, and hence on the initial condition $v_o$ and the nonlinearity $\alpha$. However, since the solution is uniformly bounded in the $\mathcal{C}^2$ norm (cf.\ \eqref{M_T_bound}), there exists a constant $M_v>0$ depending only on $M_T$, $c$, and $\kappa$ such that $\|v\|_{\infty}\leq M_v$. Consequently, $\delta_1$ and $\delta_2$ can be conservatively upper bounded as
\begin{align}
\delta_2 \leq \delta_1 \leq \bar{\delta} := \max_{|r|\leq M_v} |\alpha(r)-a|. \label{delta_bar}
\end{align}
Note that $\bar{\delta}$ depends on the choice of the observer gain $a$. In particular, one may choose $a$ to minimize $\bar{\delta}$ by setting 
\begin{align}
a := \frac{1}{2}\left(\min_{|r|\leq M_v}\alpha(r)+\max_{|r|\leq M_v}\alpha(r)\right), \label{a_opt}
\end{align}
in which case $\bar{\delta} = \frac{1}{2}\left(\max_{|r|\leq M_v}\alpha(r)-\min_{|r|\leq M_v}\alpha(r)\right)$, which is the half-range of $\alpha$ on $[-M_v,M_v]$.

To state our main result precisely, we introduce several constants and functions that depend on the backstepping kernel $p$, its inverse $l$, the observer gains $a$ and $\sigma$, and the solution bounds $\|v_x\|_{\infty}$ and $\|v_{xx}\|_{\infty}$. Their explicit definitions are collected in Appendix \ref{app_functions}.

\subsection{Statement of the main result}

We are now in a position to state our main result.

\begin{theorem}\label{thm_main}
Consider $\Sigma$ with the observer $\hat{\Sigma}$ defined in \eqref{observer_def}. Let $\delta_1$, $\delta_2$ be as in \eqref{delta1_def}--\eqref{delta2_def}, and suppose that the observer gains $a,\sigma>0$ have been chosen such that
\begin{align}
\gamma_3 \delta_2+\gamma_1\delta_1+\gamma_{14}\delta_1^2 &< 2\sigma, \label{cond_1}
\end{align}
\begin{align}
\gamma_6\delta_2+\gamma_4\delta_1+\gamma_9\delta_1^2 &< a+\sigma, \label{cond_2}\\
\frac{\gamma_7}{2}\delta_1^2 &< \frac{a}{2}, \label{cond_3}
\end{align}
where the constants $\{\gamma_{i}\}_{i=1}^{15}$ are defined in Appendix \ref{app_functions}. Then, there exists $\omega^*>0$, depending only on $a$, $\sigma$, $\delta_1$, $\delta_2$, $\alpha$, $\|v_x\|_{\infty}$, and $\|v_{xx}\|_{\infty}$, such that for every initial observation error $\tilde{v}_o \in \tilde{\mathcal{V}}_o(\omega^*)$, the following properties hold.

\begin{enumerate}[label=\textnormal{(\roman*)}]
\item \textnormal{($H^1$ exponential stability.)} There exists a constant $\sigma^*>0$ such that 
\begin{align}
|\tilde{v}(\cdot,t)|_{H^1} \leq \sqrt{M_pM_l}\,|\tilde{v}_o|_{H^1}\,e^{-\sigma^*t} \quad \forall t\geq 0, \label{H1_conv}
\end{align}
where $M_p$ and $M_l$ are constants depending only on $a$, $\sigma$, $p$, and $l$ (defined in \eqref{mpdef}--\eqref{mldef}). The convergence rate $\sigma^*$ is given by 
\begin{align}
\sigma^*&:=\frac{1}{2}\min\left\{2\sigma-\varepsilon_6\big(E^*,\delta_1,\delta_2\big),\right. \nonumber \\
&\left.\qquad \qquad 2(a+\sigma)-2\varepsilon_7\big(E^*,\delta_1,\delta_2\big)\right\}>0, \label{sigma_star_def}
\end{align}
with 
$$E^*:=\frac{M_l}{2}(\omega^*)^2,$$
and where the functions $\{\varepsilon_i\}_{i=1}^8$ are defined in Appendix \ref{app_functions}.

\item \textnormal{(Max-norm convergence.)} There holds 
\begin{align}
|\tilde{v}(\cdot,t)|_{\infty}\leq \sqrt{2M_pM_l}\,|\tilde{v}_o|_{H^1}\,e^{-\sigma^*t} \quad \forall t\geq 0. \label{max_conv}
\end{align}

\item \textnormal{(Temperature reconstruction.)} Setting $\hat{T}:=\hat{c}^{-1}(\hat{v})$, there holds  
\begin{align}
\hspace{-0.5cm}|T(\cdot,t)-\hat{T}(\cdot,t)|_{\infty}\leq  C_T\, |\tilde{v}_o|_{H^1}\, e^{-\sigma^*t} \quad \forall t\geq 0, \label{T_conv}
\end{align}
where $C_T:=L_{\hat{c}^{-1}}(\sqrt{2M_pM_l}\,\omega^*)\,\sqrt{2M_pM_l}$.
\end{enumerate}

More precisely, $\omega^*$ is any constant verifying 
\begin{align}
\varepsilon_6\left(\frac{M_l}{2}(\omega^*)^2,\delta_1,\delta_2\right) &< 2\sigma, \label{omega_cond_1}\\
\varepsilon_7\left(\frac{M_l}{2}(\omega^*)^2,\delta_1,\delta_2\right) &< a+\sigma, \label{omega_cond_2}\\
\varepsilon_8\left(\frac{M_l}{2}(\omega^*)^2,\delta_1\right) &< \frac{a}{2}.\label{omega_cond_3}
\end{align}
\end{theorem}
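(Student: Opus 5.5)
We write the error dynamics, apply the backstepping transformation, and run an $H^1$ Lyapunov argument on the target variable. Subtracting \eqref{observer_def} from \eqref{sigma_1}--\eqref{sigma_2} gives
\begin{align*}
\tilde v_t=(\alpha(v)v_x-\alpha(\hat v)\hat v_x)_x-p_1(x)\tilde v(1,t),
\end{align*}
with $\tilde v_x(0,t)=0$ and $\tilde v_x(1,t)=-p_{10}\tilde v(1,t)$. We split the diffusion term as
\begin{align*}
a\tilde v_{xx}+\big(\bar\alpha(v)\tilde v_x\big)_x+\big((\alpha(v)-\alpha(\hat v))\hat v_x\big)_x .
\end{align*}
The first term is the linear part. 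The second is the persistent mismatch term, which is linear in $\tilde v$ but has a coefficient of size $\delta_1$. The third we rewrite using $\hat v_x=v_x-\tilde v_x$, which shows it is at least linear in $\tilde v$ with coefficients involving $\|v_x\|_\infty$, $\|v_{xx}\|_\infty$, $L_\alpha$, $L_{\alpha'}$, and nonlinear in $\tilde v,\tilde v_x$. We then apply the inverse-form transformation
\begin{align*}
\tilde v(x)=\tilde w(x)-\int_x^1 p(x,y)\tilde w(y)\,dy .
\end{align*}
By \eqref{kernel_eqs}, the linear part becomes the target system $\tilde w_t=a\tilde w_{xx}-\sigma\tilde w$ with Neumann/Robin-type boundary conditions. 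What remains is a set of perturbation terms: the transformed mismatch and nonlinear terms, plus boundary terms at $x=1$ involving $\bar\alpha(v(1,t))$, which are the source of $\delta_2$. The norm equivalences $|\tilde v|_{H^1}^2\le M_p|\tilde w|_{H^1}^2$ and $|\tilde w|_{H^1}^2\le M_l|\tilde v|_{H^1}^2$, from \eqref{mpdef}--\eqref{mldef}, follow from the boundedness of $p$, $l$ and their first derivatives.

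Next we take $E(t)=\tfrac12(|\tilde w|_{L^2}^2+|\tilde w_x|_{L^2}^2)$ and differentiate along classical solutions, using integration by parts and the boundary conditions. For the $L^2$ part, the nominal contribution is $-a|\tilde w_x|^2-\sigma|\tilde w|^2$. For the derivative part, we multiply by $-\tilde w_{xx}$, and the nominal contribution is $-a|\tilde w_{xx}|^2-\sigma|\tilde w_x|^2$. The perturbations are estimated with Cauchy--Schwarz, Young's inequality, and Agmon's inequality $|\tilde w|_\infty^2\le 2|\tilde w|_{H^1}^2$ (the source of the factor $\sqrt2$ in \eqref{max_conv}).

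The design choice is to keep every mismatch term with $\delta_1$ or $\delta_2$ as a coefficient multiplying a \emph{quadratic} norm of $\tilde w$, rather than producing an additive $\delta$-term, so that only ISS-free bounds appear. Linear-in-$\delta$ terms are absorbed into $\sigma$ and $a+\sigma$; these are the $\gamma_1,\gamma_3,\gamma_4,\gamma_6$ terms. The $\delta_1^2$ terms arise from Young's inequality applied to cross terms such as $\bar\alpha\tilde v_x\tilde w_{xx}$ and are absorbed into $a|\tilde w_{xx}|^2$; these give $\gamma_7,\gamma_9,\gamma_{14}$. The genuinely nonlinear terms carry a factor $E^{1/2}$ or $E$, via Agmon's inequality and the local Lipschitz constants of $\alpha,\alpha'$ evaluated on a ball whose size depends on $E$. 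The result should be an inequality of the form
\begin{align*}
\dot E\le -(2\sigma-\varepsilon_6(E,\delta_1,\delta_2))|\tilde w|^2-(2(a+\sigma)-2\varepsilon_7)|\tilde w_x|^2-(a-2\varepsilon_8)|\tilde w_{xx}|^2 ,
\end{align*}
up to factors of $\tfrac12$, with each $\varepsilon_i$ nondecreasing in $E$ and reducing at $E=0$ to the left-hand sides of \eqref{cond_1}--\eqref{cond_3}.

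The final step is a trapping argument. Conditions \eqref{cond_1}--\eqref{cond_3}, together with continuity of the $\varepsilon_i$, ensure that some $\omega^*>0$ satisfies \eqref{omega_cond_1}--\eqref{omega_cond_3}. If $|\tilde v_o|_{H^1}\le\omega^*$, then $E(0)\le E^*$. While $E\le E^*$, monotonicity gives $\dot E\le-2\sigma^*E$, so $E$ is nonincreasing. A contradiction argument on the first exit time, using continuity of $E$, then shows that $E(t)\le E^*$ for all $t$. Grönwall's inequality yields $E(t)\le E(0)e^{-2\sigma^*t}$, and transferring back through $M_p$ and $M_l$ gives \eqref{H1_conv}. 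Agmon's inequality gives \eqref{max_conv}, and \eqref{T_conv} follows from the Lipschitz argument in the Introduction.

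The main obstacle is the bookkeeping in the $|\tilde w_x|^2$ estimate. The transformed term $\int(\bar\alpha(v)\tilde v_x)_x\tilde w_{xx}$ contains $\bar\alpha'(v)v_x\tilde v_x$ together with kernel-integral contributions. In addition, the boundary terms at $x=1$ (through $\tilde w_x(1)$ and $p_{10}$) have to be handled with trace inequalities that are absorbed into $|\tilde w_{xx}|^2$ without destroying the $a/2$ margin.
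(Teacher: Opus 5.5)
Your outline follows the paper's proof essentially step for step: backstepping to the Neumann target system, $E=V+H$ with $\dot H=-\int_0^1\tilde w_{xx}\tilde w_t\,dx$, Young and Agmon estimates, a trapping argument, and transfer back through $M_p$ and $M_l$. Your first-exit-time argument is in fact cleaner than the paper's contraposition step. The gap is the claim, which you assert rather than derive, that every $\varepsilon_i$ reduces at $E=0$ to the left-hand sides of \eqref{cond_1}--\eqref{cond_3}.

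Your own decomposition shows why this fails. The third term $((\alpha(v)-\alpha(\hat v))\hat v_x)_x$ has linear part $(\alpha'(v)v_x\tilde v)_x$. Equivalently, the linearized error equation is $\tilde v_t=(\alpha(v)\tilde v)_{xx}-p_1\tilde v(1,t)$, whose deviation from $a\tilde v_{xx}$ involves $\alpha'(v)v_x$, $\alpha'(v)v_{xx}$ and $\alpha''(v)v_x^2$. The $H$-estimate also contains $\bar\alpha'(v)(v_x^2-\hat v_x^2)=\alpha'(v)(2v_x\tilde v_x-\tilde v_x^2)$. These coefficients are not $O(\delta_1,\delta_2)$: a diffusivity with small oscillation but steep slope on the range of $v$ has small $\delta_1$ and large $\alpha'$. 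Nor are they of higher order in $\tilde v$. To write $|\alpha(v)-\alpha(\hat v)|\le L|\tilde v|$, $L$ must be a Lipschitz constant on an interval containing both $v$ and $\hat v$, i.e. on $[-(M_v+|\tilde v|_\infty),M_v+|\tilde v|_\infty]$. That constant tends to $\max_{|r|\le M_v}|\alpha'(r)|$, not to $0$, as $E\to0$. So $\dot E$ keeps a quadratic residual, e.g. $\gamma_2\max_{|r|\le M_v}|\alpha'(r)|+\gamma_{13}\max_{|r|\le M_v}|\alpha''(r)|^2$ in the coefficient of $V$. This residual is independent of both $E$ and $\delta_1,\delta_2$, and \eqref{cond_1}--\eqref{cond_3} do not control it.

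You will not find the fix in the paper, which makes exactly these moves.
\begin{itemize}
\item It bounds $|\alpha(v)-\alpha(\hat v)|\le L_\alpha(|\tilde v|_\infty)|\tilde v|$, where $L_\alpha(s)$ is the Lipschitz constant on $[-s,s]$. That interval need not contain the range of $v$. Moreover $L_\alpha(s)\to|\alpha'(0)|$ as $s\to0^+$, so in general $\varepsilon_1,\varepsilon_3$ do not tend to $0$ as $E\to0^+$, and the continuity argument of Remark~\ref{rem_exist} breaks.
\item It writes $\bar\alpha(v)(v_x^2-\hat v_x^2)$ where $\bar\alpha'(v)(v_x^2-\hat v_x^2)$ belongs. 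This is how terms that should carry $\|\alpha'(v)\|_\infty^2$ end up in $\gamma_9\delta_1^2$, $\gamma_{10}\delta_1^2$, $\gamma_{11}\delta_1^2$ and $\gamma_{14}\delta_1^2$.
\end{itemize}

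This is also not just bookkeeping. Take $\alpha\equiv a$ on $[-5,5]$, so that $L_\alpha(s)=L_{\alpha'}(s)=0$ for $s\le 5$ and \eqref{omega_cond_1}--\eqref{omega_cond_3} hold for small $\omega^*$ once $\delta$ is small. But take $\alpha(r)=a+\delta\sin(Kr)$ on the range of $v$, kept away from $0$. Then $\tilde v_t=U_{xx}-p_1\tilde v(1,t)$ with $U=\int_{\hat v}^{v}\alpha(s)\,ds$, and $U$ obeys a uniformly parabolic equation whose $O(\delta)$ oscillations at frequency of order $K$ are damped on the time scale $1/(aK^2)$. One therefore expects $\tilde v\approx U/\alpha(v)$ to acquire $|\tilde v_x|_{L^2}$ of order $(\delta K/a)|\tilde v_o|_{L^2}$ for smooth, tiny $\tilde v_o$. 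No $K$-independent constant such as $\sqrt{M_pM_l}$ can absorb this $H^1$ amplification.

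A correct version must keep these terms:
\begin{itemize}
\item take the Lipschitz constants on $[-(M_v+c\sqrt E),M_v+c\sqrt E]$;
\item use $\|\alpha'(v)\|_\infty$ instead of $\delta_1$ wherever $\alpha'$ actually appears;
\item add the resulting $E=0$ values to \eqref{cond_1}--\eqref{cond_3}, which amounts to requiring $\alpha'(v)v_x$, $\alpha'(v)v_{xx}$ and $\alpha''(v)v_x^2$ to be small relative to $a$ and $\sigma$.
\end{itemize}
Your trapping argument then goes through unchanged. The same interval issue affects $C_T$ in (iii), since $\hat c^{-1}$ must be Lipschitz on a set containing the ranges of both $v$ and $\hat v$.
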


Prior to the proof of Theorem \ref{thm_main}, the following remarks are in order.

\begin{remark}\label{rem_exist}
The existence of $\omega^*>0$ verifying \eqref{omega_cond_1}-\eqref{omega_cond_3} is guaranteed by conditions \eqref{cond_1}--\eqref{cond_3}. Indeed, note that when $E=0$, we have $\varepsilon_j(0,\delta_1,\delta_2)=0$ for $j\in \{1,\ldots,5\}$, and 
\begin{align*}
\varepsilon_6(0,\delta_1,\delta_2) &= \gamma_3\delta_2+\gamma_1\delta_1+\gamma_{14}\delta_1^2, \\
\varepsilon_7(0,\delta_1,\delta_2) &= \gamma_6\delta_2+\gamma_4\delta_1+\gamma_9\delta_1^2, \\
\varepsilon_8(0,\delta_1) &= \frac{\gamma_7}{2}\delta_1^2.
\end{align*}
Hence, conditions \eqref{cond_1}--\eqref{cond_3} ensure that \eqref{omega_cond_1}--\eqref{omega_cond_3} are satisfied for $\omega^*=0$. By continuity and the nondecreasing property of $\varepsilon_6$, $\varepsilon_7$, $\varepsilon_8$ in $E$, there exists $\omega^*>0$ such that \eqref{omega_cond_1}--\eqref{omega_cond_3} hold. Note that the conditions \eqref{cond_1}--\eqref{cond_3} are satisfied provided we can choose $a$ close enough to $\alpha$ and $\sigma$ large enough.
\hfill $\bullet$
\end{remark}

\begin{remark}\label{rem_linear}
When $\alpha\equiv a$ (the linear case), we have $\bar{\alpha}\equiv 0$ and hence $\delta_1=\delta_2=0$. In this case, conditions \eqref{cond_1}--\eqref{cond_3} are trivially satisfied. Furthermore, we have $\varepsilon_i \equiv 0$ for each $i\in \{6,7,8\}$ and $\sigma^*=\sigma$. Hence, \eqref{omega_cond_1}--\eqref{omega_cond_3} hold for any $\omega^*>0$, and we recover the well-known global exponential stability result of the linear case at rate $\sigma$.
\hfill $\bullet$
\end{remark}

\begin{remark}
A distinctive feature of the observer-design problem, as opposed to the stabilization problem, is that the mismatch $\bar{\alpha}(v)$ need not vanish as $\tilde{v}\to 0$. Indeed, in stabilization, the state $v$ itself converges to zero, and so $\bar{\alpha}(v)\to \bar{\alpha}(0)=\alpha(0)-a$. If one chooses $a=\alpha(0)$, then $\bar{\alpha}(v)\to 0$ along trajectories, which greatly simplifies the analysis. In contrast, for the observer-design problem, $v$ evolves independently of the observation error and $\bar{\alpha}(v)$ remains generically bounded away from zero. Overcoming this persistent perturbation is an important technical challenge of this work, and it is precisely captured by conditions \eqref{cond_1}--\eqref{cond_3}. Note that choosing $a=\alpha(0)$ is no longer the natural choice for the observer problem; instead, $a$ should be chosen to minimize the maximum of $|\bar{\alpha}|$ over the range of $v$ (cf.\ \eqref{a_opt}).
\hfill $\bullet$
\end{remark}

\section{Proof of the Main Result}\label{sec_proof}

The proof consists of five steps. In Step 1, we derive the target system for the observation error via the backstepping transformation. In Steps 2 and 3, we establish differential inequalities on the $L^2$ norm and the $H^1$ seminorm of the transformed error. In Step 4, we combine these to obtain a differential inequality on the $H^1$ norm. In Step 5, we conclude $H^1$ exponential stability.

$\bullet$ \textit{Step 1: The Target System}

Let $\hat{v}$ be the forward-complete classical solution to $\hat{\Sigma}$. We introduce the backstepping transformation 
\begin{align}
\tilde{v}(x) = \tilde{w}(x) - \int_x^1 p(x,y)\tilde{w}(y)dy, \label{f_back}
\end{align}
and the corresponding inverse transformation 
\begin{align}
\tilde{w}(x) = \tilde{v}(x) + \int_x^1 l(x,y)\tilde{v}(y)dy.\label{inv_back}
\end{align}
In the next lemma, we derive the equation governing $\tilde{w}$.
\begin{lemma}
It holds that
\begin{equation*}
\Sigma_{\tilde{w}} : \left\lbrace 
\begin{aligned}
&\tilde{w}_t = a \tilde{w}_{xx} - \sigma \tilde{w} + f, \\
&\tilde{w}_x(0,t) = \tilde{w}_x(1,t) = 0, \\
&\tilde{w}(x,0) = \tilde{w}_o(x),
\end{aligned}
\right.
\end{equation*}
where 
\begin{align}
f(x) :=&~ (\bar{\alpha}(v(x))v_x(x))_x-(\bar{\alpha}(\hat{v}(x))\hat{v}_x(x))_x\nonumber \\
&~+\int_x^1l(x,y)(\bar{\alpha}(v(y))v_y(y))_ydy \nonumber\\
&~-\int_x^1l(x,y)(\bar{\alpha}(\hat{v}(y))\hat{v}_y(y))_ydy, \label{f_def}
\end{align}
and 
\begin{align}
\tilde{w}_o(x) := \tilde{v}_o(x) + \int_x^1 l(x,y)\tilde{v}_o(y)dy. 
\end{align}
\end{lemma}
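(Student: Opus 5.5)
We first derive the equation for $\tilde v$ and then push it through the inverse transformation \eqref{inv_back}. Subtracting \eqref{observer_def} from \eqref{sigma_1}--\eqref{sigma_2} and writing $\alpha=a+\bar\alpha$ as in \eqref{mismatch_def}, we get
\begin{align*}
\tilde v_t &= a\tilde v_{xx} - p_1(x)\tilde v(1,t) + g(x,t),\\
\tilde v_x(0,t) &=0,\qquad \tilde v_x(1,t) = -p_{10}\tilde v(1,t),
\end{align*}
where $g:=(\bar\alpha(v)v_x)_x-(\bar\alpha(\hat v)\hat v_x)_x$. Without $g$, this is exactly the linear observer error system of \cite{backstepping}. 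The kernel $p$ in \eqref{kernel_eqs}, with $p_1,p_{10}$ from \eqref{p1}--\eqref{p10}, maps the target $\tilde w_t=a\tilde w_{xx}-\sigma\tilde w$, $\tilde w_x(0)=\tilde w_x(1)=0$ into it through \eqref{f_back}. Equivalently, the inverse kernel $l$ maps the $\tilde v$-system into the target through \eqref{inv_back}. So we take the linear computation as known and track only the extra term $g$.

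The transformation \eqref{inv_back} is linear and does not depend on time. Differentiating it in $t$ therefore gives
\begin{align*}
\tilde w_t=\tilde v_t+\int_x^1 l(x,y)\tilde v_t(y)\,dy .
\end{align*}
Substituting $\tilde v_t=(\text{linear part})+g$, the linear part produces $a\tilde w_{xx}-\sigma\tilde w$ by the standard backstepping identities. These are the integrations by parts in $y$, the use of the kernel PDE for $l$, its boundary values $l(x,x)$ and $l_y$ at $y=1$, and the cancellation of the $\tilde v(1,t)$ terms through $p_1,p_{10}$. The remaining contribution is $g(x)+\int_x^1 l(x,y)g(y)\,dy$, which is exactly $f$ in \eqref{f_def}.

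For the boundary conditions, note that \eqref{inv_back} does not involve $\tilde v_t$. Hence the boundary computation for $\tilde w_x(0)$ and $\tilde w_x(1)$ is identical to the linear case, using $\tilde v_x(0)=0$ and $\tilde v_x(1)=-p_{10}\tilde v(1)$ together with the boundary conditions on $l$. The initial condition follows by evaluating \eqref{inv_back} at $t=0$.

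The main obstacle is bookkeeping. The kernel equations for $l$ are not written in the paper, so they have to be derived, or the identity can be verified via \eqref{f_back} instead. We also need enough regularity to differentiate under the integral. Classical solutions in $\mathcal{H}^{2+\beta,1+\beta/2}_{loc}$ together with a smooth kernel justify this.
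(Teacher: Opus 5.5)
\textbf{The gap: the appeal to the linear case fails at $x=1$.} You take as known that the kernel $p$, with $p_1,p_{10}$ from \eqref{p1}--\eqref{p10}, maps the target system onto the $\tilde v$-system. With the gains as the paper defines them, this fails at $x=1$. Carrying out your boundary computation with \eqref{inv_back} and $l(x,x)=p(x,x)$ gives
\[
\tilde w_x(1,t)=\tilde v_x(1,t)-l(1,1)\tilde v(1,t)=-\big(p_{10}+p(1,1)\big)\tilde v(1,t).
\]
For $p_{10}:=p(1,1)=-\sigma/(2a)$ this equals $\frac{\sigma}{a}\tilde w(1,t)\neq 0$.

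The same sign problem breaks your interior cancellation. In your route, integrating $\int_x^1 l\,\tilde v_{yy}\,dy$ by parts produces the term $a\,l(x,1)\tilde v_x(1)=-a\,p_{10}\,l(x,1)\tilde v(1)$. The $\tilde v(1)$-terms then leave the residual $-a\big(p_{10}+p(1,1)\big)l(x,1)\tilde v(1)=\sigma\,l(x,1)\tilde w(1)$. You can check this at $x=1$ using $l_y(1,1)=p_y(1,1)+p(1,1)^2$.

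The linear identity you invoke holds only for $p_{10}=-p(1,1)=\sigma/(2a)$. That is the standard choice, and it gives the dissipative condition $\tilde v_x(1)=-\frac{\sigma}{2a}\tilde v(1)$. With \eqref{p10} as written, $\Sigma_{\tilde w}$ does not hold: both the Neumann condition at $x=1$ and the interior equation acquire $\tilde w(1)$-terms. So the statement cannot be proved as stated. The paper's own proof misses this because it never verifies the boundary conditions of $\Sigma_{\tilde w}$; it simply uses $\tilde w_x(1)=0$ in its integration by parts.

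\textbf{Comparison of routes once the sign is corrected.} With $p_{10}=-p(1,1)$, your argument is sound, but it takes a different route from the paper's.

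The paper works through the forward map \eqref{f_back}. It defines $f:=\tilde w_t-a\tilde w_{xx}+\sigma\tilde w$ and uses only the kernel equations for $p$ to get the Volterra relation $g=f-\int_x^1 p(x,y)f(y)\,dy$. It then inverts this with $l$. It therefore never needs a PDE for $l$, only the fact that $I+L$ inverts $I-P$.

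Your superposition through \eqref{inv_back} makes it transparent why the interior forcing $g$ becomes $g+\int_x^1 l\,g$. It also isolates the boundary check that the paper omits. The cost is that you need one of two things:
\begin{enumerate}
\item[(a)] the kernel equations for $l$, namely $a(l_{yy}-l_{xx})+\sigma l=0$, $l(x,x)=-\frac{\sigma}{2a}x$, and $l_x(0,y)=0$, where the last follows from $p_x(0,y)=0$ and $p(0,0)=0$ and gives $\tilde w_x(0)=0$; or
\item[(b)] an operator-level appeal to the forward identity.
\end{enumerate}
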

\begin{proof}
We let 
\begin{align}
f(x) := \tilde{w}_t(x) - a \tilde{w}_{xx}(x)+\sigma \tilde{w}(x) 
\end{align}
to be determined, and show that it is given by \eqref{f_def}. Differentiating both sides of \eqref{f_back} with respect to $t$, we obtain 
\begin{align*}
\tilde{v}_t(x)& = \tilde{w}_t(x)-\int_{x}^1p(x,y)\tilde{w}_t(y)dy \nonumber \\
=&~ a\tilde{w}_{xx}(x)-\sigma \tilde{w}(x) + f(x) -a\int_{x}^1p(x,y)\tilde{w}_{yy}(y)dy
\end{align*}
\begin{align*}
&~+ \sigma \int_{x}^1p(x,y)\tilde{w}(y)dy-\int_{x}^{1}p(x,y)f(y)dy. 
\end{align*}
Using integration by parts and the boundary condition $\tilde{w}_x(1)=0$, note that we have 
\begin{align*}
\int_{x}^1p(x,y)\tilde{w}_{yy}(y)dy =&~ \left[p(x,y)\tilde{w}_y(y)\right]_{y=x}^{y=1} \nonumber \\&~-\int_{x}^{1}p_y(x,y)\tilde{w}_y(y)dy \nonumber \\
=&~p(x,1)\tilde{w}_x(1)-p(x,x)\tilde{w}_x(x)  \nonumber \\
&~- \left[p_y(x,y)\tilde{w}(y) \right]_{y=x}^{y=1} \nonumber \\
&~+\int_{x}^1p_{yy}(x,y)\tilde{w}(y)dy \nonumber \\
=&~ - p(x,x)\tilde{w}_x(x)-p_y(x,1)\tilde{w}(1) \nonumber \\
&~+p_y(x,x)\tilde{w}(x) \nonumber \\
&~+\int_{x}^1p_{yy}(x,y)\tilde{w}(y)dy. 
\end{align*}
As a result, we have
\begin{align*}
\tilde{v}_t(x) =&~ a\tilde{w}_{xx}(x)-\big(\sigma+ap_y(x,x)\big) \tilde{w}(x)+f(x) \nonumber \\
&~+ap(x,x)\tilde{w}_x(x)+ap_y(x,1)\tilde{w}(1) \nonumber \\
&~+\int_{x}^1\tilde{w}(y) \big(\sigma p(x,y)-ap_{yy}(x,y)\big)dy\nonumber \\
&~-\int_{x}^1p(x,y)f(y)dy. 
\end{align*}
Next, we differentiate both sides of \eqref{f_back} with respect to $x$, to obtain 
\begin{align*}
\tilde{v}_x(x) = \tilde{w}_x(x)+p(x,x)\tilde{w}(x)-\int_{x}^{1}p_x(x,y)\tilde{w}(y)dy.
\end{align*}
Differentiating a second time with respect to $x$, we find 
\begin{align*}
\tilde{v}_{xx}(x) =&~\tilde{w}_{xx}(x)+\left( \frac{d}{dx}p(x,x)\right)\tilde{w}(x)+p(x,x)\tilde{w}_x(x) \end{align*}
\begin{align*}
&~+p_x(x,x)\tilde{w}(x)-\int_{x}^{1}p_{xx}(x,y)\tilde{w}(y)dy. 
\end{align*}
As a result, we have 
\begin{align*}
\tilde{v}_t(x)-a\tilde{v}_{xx}(x) +p_1(x)\tilde{v}(1)=&~ -\left(\sigma+2a\frac{d}{dx}p(x,x)\right)\tilde{w}\nonumber \\
&~+ap_y(x,1)\tilde{w}(1) +f(x) \nonumber
\end{align*}
\begin{align*}
&~-\int_{x}^{1}p(x,y)f(y)dy.
\end{align*}
Given the kernel equations governing $p$, our choice of $p_1(x)$, and the fact that $\tilde{v}(1)=\tilde{w}(1)$ due to \eqref{f_back}, we conclude that 
\begin{align*}
(\bar{\alpha}(v(x))v_x(x))_x-(\bar{\alpha}(\hat{v}(x))\hat{v}_x(x))_x &= f(x) \\
&-\int_{x}^{1}p(x,y)f(y)dy.
\end{align*}
Hence, \eqref{f_def} follows by applying the inverse backstepping transformation.
\end{proof}

$\bullet$ \textit{Step 2: Differential Inequality on $t\mapsto |\tilde{w}(\cdot,t)|_{L^2}^2$}

We study the stability of $\{\tilde{w}=0\}$ and then use \eqref{inv_back} to conclude on the stability of $\{\tilde{v}=0\}$. To this end, we first introduce the functional 
\begin{align}
V(\tilde{w}(\cdot,t)) := \frac{1}{2}|\tilde{w}(\cdot,t)|_{L^2}^2.
\end{align}
In the next lemma, we derive a key inequality on $\dot{V}$. 
\begin{lemma}
It holds that 
\begin{align}
&\dot{V} \leq \bigg\{ -2\sigma + \gamma_1 L_{\alpha}\bigg((1+|p|_{\infty})|\tilde{w}|_{\infty}\bigg)(1+|p|_{\infty})|\tilde{w}|_{\infty} \nonumber \\
&+ \gamma_2L_{\alpha}\bigg((1+|p|_{\infty})|\tilde{w}|_{\infty}\bigg) + \gamma_3L_{\alpha}(|\tilde{w}(1)|)|\tilde{w}(1)| \nonumber \\
&+ \gamma_3 |\bar{\alpha}(v(1))|+\gamma_1|\bar{\alpha}(v)|_{\infty}\bigg\}V +\bigg\{-a  \nonumber \\
&+ \gamma_4L_{\alpha}\bigg((1+|p|_{\infty})|\tilde{w}|_{\infty}\bigg)(1+|p|_{\infty})|\tilde{w}|_{\infty} \nonumber \\
&+ \gamma_5L_{\alpha}\bigg((1+|p|_{\infty})|\tilde{w}|_{\infty}\bigg) + \gamma_6L_{\alpha}(|\tilde{w}(1)|)|\tilde{w}(1)| \nonumber \\
&+ \gamma_{6} |\bar{\alpha}(v(1))|+\gamma_{4}|\bar{\alpha}(v)|_{\infty}\bigg\}|\tilde{w}_x|_{L^2}^2.   \label{To_Prove_1}
\end{align}
\end{lemma}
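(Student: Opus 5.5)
We differentiate $V$ along $\Sigma_{\tilde{w}}$ and integrate by parts. With $\tilde{w}_x(0)=\tilde{w}_x(1)=0$ we get
\begin{align*}
\dot V = -a|\tilde{w}_x|_{L^2}^2-\sigma|\tilde{w}|_{L^2}^2+\int_0^1\tilde{w}f\,dx ,
\end{align*}
and $-\sigma|\tilde{w}|_{L^2}^2=-2\sigma V$. So the whole task is to bound $\int_0^1\tilde{w}f\,dx$ by the bracketed coefficients times $V$ and $|\tilde{w}_x|_{L^2}^2$.

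\textbf{Rewriting $f$.} Let $g:=\bar{\alpha}(v)v_x-\bar{\alpha}(\hat{v})\hat{v}_x$, so that $f=g_x+\int_x^1 l(x,y)g_y(y)\,dy$. Integrating by parts in $y$ gives
\begin{align*}
\int_x^1 l\,g_y\,dy=l(x,1)g(1)-l(x,x)g(x)-\int_x^1 l_y(x,y)g(y)\,dy .
\end{align*}
Hence $\int_0^1\tilde{w}f$ is a sum of three kinds of terms: $\int_0^1\tilde{w}g_x$, terms in $\int\tilde{w}g$ weighted by $l$ or $l_y$, and a boundary term involving $g(1)$.

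\textbf{Boundary term.} Since $\hat{v}_x(1)=p_{10}\tilde{v}(1)$, $v_x(1)=0$ and $\tilde{v}(1)=\tilde{w}(1)$, we have
\begin{align*}
g(1)=-\bar{\alpha}(\hat{v}(1))\,p_{10}\,\tilde{w}(1)=-p_{10}\big(\bar{\alpha}(v(1))+[\alpha(\hat v(1))-\alpha(v(1))]\big)\tilde{w}(1).
\end{align*}
The difference in brackets is bounded by $L_\alpha(|\tilde{w}(1)|)|\tilde{w}(1)|$, up to the Lipschitz-interval convention of the paper. This explains the $\gamma_3,\gamma_6$ terms: $|\tilde{w}(1)|^2\le 2|\tilde{w}|_{L^2}^2+2|\tilde{w}|_{L^2}|\tilde{w}_x|_{L^2}$-type (Agmon) bounds, followed by Young's inequality, split it between $V$ and $|\tilde{w}_x|^2_{L^2}$. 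The term $\int_0^1\tilde{w}g_x$ is integrated by parts once more. This produces another boundary contribution $\tilde{w}(1)g(1)-\tilde{w}(0)g(0)$, where $g(0)=0$ because $v_x(0)=\hat{v}_x(0)=0$, and also $-\int\tilde{w}_x g$.

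\textbf{Interior terms.} We decompose
\begin{align*}
g=\bar{\alpha}(v)\tilde{v}_x+\big(\alpha(v)-\alpha(\hat{v})\big)\hat{v}_x,\qquad \hat{v}_x=v_x-\tilde{v}_x .
\end{align*}
The first piece contributes $|\bar{\alpha}(v)|_\infty$ times $|\tilde{v}_x|_{L^2}$, which gives the $\gamma_1,\gamma_4$ terms. The second piece splits into $L_\alpha(|\tilde{v}|_\infty)|\tilde{v}|\,|v_x|$, producing the $\gamma_2,\gamma_5$ terms after using $\|v_x\|_\infty$, and $L_\alpha|\tilde{v}|_\infty|\tilde{v}_x|$, producing the quadratic $\gamma_1,\gamma_4$-type terms. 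We then express everything in $\tilde{w}$ through \eqref{f_back}:
\begin{align*}
|\tilde{v}|_\infty\le(1+|p|_\infty)|\tilde{w}|_\infty,\qquad |\tilde{v}_x|_{L^2}\le C\big(|\tilde{w}_x|_{L^2}+|\tilde{w}|_{L^2}\big),
\end{align*}
using the derivative formula from the preceding lemma. Cauchy--Schwarz and Young's inequality distribute each product between $V$ and $|\tilde{w}_x|^2_{L^2}$. The constants, built from $|p|_\infty,|p_x|_\infty,|l|_\infty,|l_y|_\infty,p_{10},\|v_x\|_\infty$, are collected as the $\gamma_i$ of Appendix \ref{app_functions}.

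\textbf{Main obstacle.} The delicate point is the term $\bar{\alpha}(v)\tilde{v}_x\tilde{w}_x$, which is only multiplied by $\delta_1$ and does not vanish as $\tilde{w}\to0$. It must be kept linear in $|\bar{\alpha}(v)|_\infty$ and charged against $-a|\tilde{w}_x|^2_{L^2}$, yielding the $\gamma_4|\bar\alpha(v)|_\infty$ coefficient. Using Young's inequality with an additive constant here would lose exponential stability. Every application of Young's inequality must therefore be homogeneous of degree two in $\tilde{w}$, so that the final bound has exactly the form of \eqref{To_Prove_1}.
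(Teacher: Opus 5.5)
Your energy identity, your treatment of $\int_0^1\tilde w\,g_x\,dx$ (with boundary term $-p_{10}\bar\alpha(\hat v(1))\tilde w(1)^2$), the $l(x,1)g(1)$ and $l(x,x)g(x)$ terms, and the bound $|\tilde v|_\infty\le(1+|p|_\infty)|\tilde w|_\infty$ all coincide with the paper. The difference is in the Volterra part of $f$. It matters because the lemma asserts the inequality with the specific constants $\gamma_1,\dots,\gamma_6$ of Appendix~\ref{app_functions}. You integrate by parts once, leaving $-\int_x^1 l_y(x,y)g(y)\,dy$. The paper integrates by parts a second time, using $\bar\alpha(v)v_y=\partial_y\int_0^{v}\bar\alpha(s)\,ds$ and the same identity for $\hat v$, so that this part becomes
\[
-l_y(x,1)\int_{\hat v(1)}^{v(1)}\bar\alpha(s)\,ds+l_y(x,x)\int_{\hat v(x)}^{v(x)}\bar\alpha(s)\,ds+\int_x^1 l_{yy}(x,y)\int_{\hat v(y)}^{v(y)}\bar\alpha(s)\,ds\,dy .
\]
Each inner integral is bounded pointwise by $L_\alpha(|\tilde v|)|\tilde v|^2+|\bar\alpha(v)||\tilde v|$, so no derivative of $\tilde v$ remains. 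Consequences in the paper:
\begin{itemize}
\item The interior Volterra terms feed only the $V$-bracket; these are the $\sup_x|l_y(x,x)|$ and $|l_{yy}|_{L^2}$ parts of $\gamma_1$.
\item The $y=1$ boundary term produces the $|l_y(\cdot,1)|_{L^2}$ parts of $\gamma_3$ and $\gamma_6$, via $|\tilde w(1)|\sqrt{2V}\le 3V+|\tilde w_x|_{L^2}^2$.
\item The coefficient of $|\bar\alpha(v)|_\infty$ in front of $|\tilde w_x|_{L^2}^2$ is the kernel-free $\gamma_4=(8a+3\sigma)/(4a)$, and $\gamma_2$ contains no $l$ at all.
\end{itemize}

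In your version, $g=\bar\alpha(\hat v)\tilde v_y+(\alpha(v)-\alpha(\hat v))v_y$, so $\int_0^1\tilde w(x)\int_x^1 l_y(x,y)g(y)\,dy\,dx$ still contains $\tilde v_y$, whose $L^2$ norm involves $|\tilde w_x|_{L^2}$. Whatever Young weights you choose, a positive multiple of $|l_y|_{L^2}|\bar\alpha(v)|_\infty$ must land in the $|\tilde w_x|_{L^2}^2$ bracket; this has no counterpart in $\gamma_4$. The $v_y$ piece also adds an $|l_y|_{L^2}\|v_x\|_\infty L_\alpha$ term to the $V$-bracket, which has no counterpart in $\gamma_2$.

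Carried out, your argument proves an inequality of the same shape but with different constants. These are built from the triangle norm $|l_y|_{L^2}$, with no $l_{yy}$, $l_y(x,x)$ or $l_y(\cdot,1)$. Your closing claim that they ``are collected as the $\gamma_i$ of Appendix'' is exactly the unjustified step. You never track a single constant, and the sup-norms you list are not the $L^2$ kernel norms the appendix uses. So what you have is a valid variant of the lemma, not the stated one.

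The trade-off is this. Your single integration by parts is shorter and needs only $l_y$. The paper's second integration by parts needs $l_{yy}$, but it keeps $\tilde v_x$ out of the Volterra term. That is what makes the $\delta_1$-coefficient $\gamma_4$ in \eqref{cond_2} independent of the kernel. To obtain the lemma as stated, add that second integration by parts and the pointwise bound on $\int_{\hat v}^{v}\bar\alpha(s)\,ds$. Your caveat about the Lipschitz-interval convention is fair, and it applies equally to the paper's own estimates.
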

\begin{proof}
Differentiating $V$ with respect to $t$, we obtain 
\begin{align}
\dot{V} =&~ \int_0^1 \tilde{w}(x)\left[a\tilde{w}_{xx}(x)-\sigma \tilde{w}(x)\right]dx \nonumber \\
&~+\int_{0}^{1}\tilde{w}(x)f(x)dx. 
\end{align}
Using integration by parts and the boundary conditions $\tilde{w}_x(0)=\tilde{w}_x(1)=0$, note that we have 
\begin{align}
\int_{0}^{1}\tilde{w}(x)\tilde{w}_{xx}(x)dx =&~ \left[\tilde{w}(x)\tilde{w}_x(x)\right]_{x=0}^{x=1} -|\tilde{w}_x|_{L^2}^2 \nonumber \\
=&~-|\tilde{w}_x|_{L^2}^2.
\end{align}
As a result, 
\begin{align}
\dot{V} = - 2 \sigma V - a |\tilde{w}_x|_{L^2}^2+ \int_{0}^{1}\tilde{w}(x)f(x)dx. \label{diff_g}
\end{align}
We will derive now an upperbound on $\int_{0}^{1}\tilde{w}(x)f(x)dx$. To do so, we start by applying integration by parts to the integral terms in the expression of $f$, to obtain 
\begin{align*}
&\int_{x}^{1}l(x,y)\big(\bar{\alpha}(v(y))v_y(y)\big)_ydy = \left[l(x,y)\bar{\alpha}(v(y))v_y(y)\right]_{y=x}^{y=1} \nonumber \\
&~\qquad \qquad \qquad-\int_{x}^{1}l_y(x,y)\bar{\alpha}(v(y))v_y(y)dy \nonumber \\
&= - l(x,x)\bar{\alpha}(v(x))v_x(x) -\left[l_y(x,y)\int_{0}^{v(y)}\bar{\alpha}(s)ds\right]_{y=x}^{y=1} \nonumber \\
&+\int_{x}^{1}l_{yy}(x,y)\int_{0}^{v(y)}\bar{\alpha}(s)ds dy 
\end{align*}
\begin{align*}
&= - l(x,x)\bar{\alpha}(v(x))v_x(x) - l_y(x,1)\int_{0}^{v(1)}\bar{\alpha}(s)ds \nonumber \\
&+l_y(x,x)\int_{0}^{v(x)}\bar{\alpha}(s)ds+\int_{x}^{1}l_{yy}(x,y)\int_{0}^{v(y)}\bar{\alpha}(s)ds dy.
\end{align*}
Similarly, note that we have 
\begin{align*}
&\int_{x}^{1}l(x,y)l(x,y)\big(\bar{\alpha}(\hat{v}(y))\hat{v}_y(y)\big)_ydy \\
&= \left[l(x,y)\bar{\alpha}(v(y))\hat{v}_y(y)\right]_{y=x}^{y=1} -\int_{x}^{1}l_y(x,y)\bar{\alpha}(\hat{v}(y))\hat{v}_y(y)dy \nonumber \\
&= l(x,1)\bar{\alpha}(\hat{v}(1))p_{10}\tilde{v}(1)- l(x,x)\bar{\alpha}(\hat{v}(x))\hat{v}_x(x) \nonumber \\
&-\left[l_y(x,y)\int_{0}^{\hat{v}(y)}\bar{\alpha}(s)ds\right]_{y=x}^{y=1} \\
&+\int_{x}^{1}l_{yy}(x,y)\int_{0}^{\hat{v}(y)}\bar{\alpha}(s)ds dy \\
&= l(x,1)\bar{\alpha}(\hat{v}(1))p_{10}\tilde{v}(1) - l(x,x)\bar{\alpha}(\hat{v}(x))\hat{v}_x(x) \nonumber \\
&- l_y(x,1)\int_{0}^{\hat{v}(1)}\bar{\alpha}(s)ds+l_y(x,x)\int_{0}^{\hat{v}(x)}\bar{\alpha}(s)ds \\
&+\int_{x}^{1}l_{yy}(x,y)\int_{0}^{\hat{v}(y)}\bar{\alpha}(s)ds dy.
\end{align*}
As a result, we can rewrite $f$ as 
\begin{align*}
&f(x) = (\bar{\alpha}(v(x))v_x(x))_x-(\bar{\alpha}(\hat{v}(x))\hat{v}_x(x))_x \\
& - l(x,x)\bar{\alpha}(v(x))v_x(x) - l_y(x,1)\int_{0}^{v(1)}\bar{\alpha}(s)ds \nonumber \\
&+l_y(x,x)\int_{0}^{v(x)}\bar{\alpha}(s)ds+\int_{x}^{1}l_{yy}(x,y)\int_{0}^{v(y)}\bar{\alpha}(s)ds dy \\
& -l(x,1)\bar{\alpha}(\hat{v}(1))p_{10}\tilde{v}(1) + l(x,x)\bar{\alpha}(\hat{v}(x))\hat{v}_x(x)\\
&+ l_y(x,1)\int_{0}^{\hat{v}(1)}\bar{\alpha}(s)ds-l_y(x,x)\int_{0}^{\hat{v}(x)}\bar{\alpha}(s)ds 
\end{align*}
\begin{align*}
&-\int_{x}^{1}l_{yy}(x,y)\int_{0}^{\hat{v}(y)}\bar{\alpha}(s)ds dy. 
\end{align*}
Grouping the terms together we obtain 
\begin{align*}
&f(x) = (\bar{\alpha}(v(x))v_x(x))_x-(\bar{\alpha}(\hat{v}(x))\hat{v}_x(x))_x \\
&~-l(x,x)\bigg(\bar{\alpha}(v(x))v_x(x)-\bar{\alpha}(\hat{v}(x))\hat{v}_x(x)\bigg)
\end{align*}
\begin{align*}
&~-l_y(x,1)\int_{\hat{v}(1)}^{v(1)}\bar{\alpha}(s)ds +l_y(x,x)\int_{\hat{v}(x)}^{v(x)}\bar{\alpha}(s)ds \\
&~-l(x,1)\bar{\alpha}(\hat{v}(1))p_{10}\tilde{v}(1) +\int_{x}^{1}l_{yy}(x,y)\int_{\hat{v}(y)}^{v(y)}\bar{\alpha}(s)dsdy. 
\end{align*}
Next, note that we have 
\begin{align*}
\bar{\alpha}(v)v_x-\bar{\alpha}(\hat{v})\hat{v}_x &= \bar{\alpha}(v)\tilde{v}_x +\left(\bar{\alpha}(v)-\bar{\alpha}(\hat{v})\right)\hat{v}_x \\
&= \bar{\alpha}(v)\tilde{v}_x +\left(\bar{\alpha}(v)-\bar{\alpha}(\hat{v})\right)(v_x-\tilde{v}_x).
\end{align*}
Hence, 
\begin{align*}
&f(x) = \big(\left(\bar{\alpha}(v(x))-\bar{\alpha}(\hat{v}(x))\right)(v_x(x)-\tilde{v}_x(x))\big)_x \\
&~+ \left(\bar{\alpha}(v(x))\tilde{v}_x(x)\right)_x-l(x,x)\bar{\alpha}(v(x))\tilde{v}_x(x)\\
&~-l(x,x)\left(\bar{\alpha}(v(x))-\bar{\alpha}(\hat{v}(x))\right)(v_x(x)-\tilde{v}_x(x))\\
&~-l_y(x,1)\int_{\hat{v}(1)}^{v(1)}\bar{\alpha}(s)ds +l_y(x,x)\int_{\hat{v}(x)}^{v(x)}\bar{\alpha}(s)ds 
\end{align*}
\begin{align*}
&~-l(x,1)\bar{\alpha}(\hat{v}(1))p_{10}\tilde{v}(1) +\int_{x}^{1}l_{yy}(x,y)\int_{\hat{v}(y)}^{v(y)}\bar{\alpha}(s)dsdy. 
\end{align*}
We can now derive an upperbound on $\int_{0}^{1}\tilde{w}(x)f(x)dx$. 

$\circ$ \textit{Upperbound on $\int_{0}^{1}\tilde{w}\big(\left(\bar{\alpha}(v)-\bar{\alpha}(\hat{v})\right)(v_x-\tilde{v}_x)\big)_x dx$}

First note that 
\begin{align}
\int_{0}^{1}&\tilde{w}(x)\big(\left(\bar{\alpha}(v(x))-\bar{\alpha}(\hat{v}(x))\right)(v_x(x)-\tilde{v}_x(x))\big)_x dx \nonumber \\
=&~ \bigg[\tilde{w}(x)\left(\bar{\alpha}(v(x))-\bar{\alpha}(\hat{v}(x))\right)(v_x(x)-\tilde{v}_x(x))\bigg]_{x=0}^{x=1} \nonumber\\
&~-\int_{0}^{1}\tilde{w}_x(x)\left(\bar{\alpha}(v(x))-\bar{\alpha}(\hat{v}(x))\right)(v_x(x)-\tilde{v}_x(x))dx \nonumber\\
=&~\left(\bar{\alpha}(v(1))-\bar{\alpha}(\hat{v}(1))\right)p_{10}\tilde{w}(1)^2 \nonumber\\
&~-\int_{0}^{1}\tilde{w}_x(x)\left(\bar{\alpha}(v(x))-\bar{\alpha}(\hat{v}(x))\right)v_x(x)dx \nonumber \\
&~+\int_{0}^{1}\tilde{w}_x(x)\left(\bar{\alpha}(v(x))-\bar{\alpha}(\hat{v}(x))\right)\tilde{v}_x(x)dx. \label{first_1}
\end{align}
Given the Lipschitz continuity of $\alpha$ on bounded sets (which is due to the fact that $\alpha$ is of class $\mathcal{C}^2$), we have 
\begin{align*}
|\bar{\alpha}(v(x))-\bar{\alpha}(\hat{v}(x))| &= |\alpha(v(x))-\alpha(\hat{v}(x))|\\
&\leq L_{\alpha}(|\tilde{v}(x)|)|\tilde{v}(x)| \\
&\leq L_{\alpha}(|\tilde{v}|_{\infty})|\tilde{v}(x)|. 
\end{align*}
Hence, applying Young's inequality, we obtain
\begin{align*}
&\left|\int_{0}^{1}\tilde{w}_x(x)\left(\bar{\alpha}(v(x))-\bar{\alpha}(\hat{v}(x))\right)v_x(x)dx\right|
\end{align*}
\begin{align*}
&~\quad \leq \int_{0}^{1}\left|\tilde{w}_x(x)\left(\bar{\alpha}(v(x))-\bar{\alpha}(\hat{v}(x))\right)v_x(x)\right|dx \\
&~\quad \leq |v_x|_{\infty}L_{\alpha}(|\tilde{v}|_{\infty})\int_{0}^{1}|\tilde{w}_x(x)| |\tilde{v}(x)|dx \\
&~\quad \leq \frac{1}{2}|v_x|_{\infty}L_{\alpha}(|\tilde{v}|_{\infty})|\tilde{w}_x|_{L^2}^2 + \frac{1}{2}|v_x|_{\infty}L_{\alpha}(|\tilde{v}|_{\infty})|\tilde{v}|_{L^2}^2
\end{align*}
Furthermore, squaring both sides of \eqref{f_back}, integrating with respect to $x$, and applying Young's and Cauchy-Schwarz inequalities, we have 
\begin{align}
|\tilde{v}|_{L^2}^2 =&~ \int_{0}^{1}\left(\tilde{w}(x) - \int_x^1 p(x,y)\tilde{w}(y)dy\right)^2dx \nonumber
\end{align}
\begin{align}
=&~ |\tilde{w}|_{L^2}^2+\int_{0}^{1}\left(\int_x^1 p(x,y)\tilde{w}(y)dy\right)^2dx \nonumber\\
&~-2\int_{0}^{1}\tilde{w}(x)\left(\int_x^1 p(x,y)\tilde{w}(y)dy\right)dx\nonumber \\
\leq&~ 2|\tilde{w}|_{L^2}^2+2\int_{0}^{1}\left(\int_x^1 p(x,y)\tilde{w}(y)dy\right)^2dx \nonumber\\
\leq&~ 2|\tilde{w}|_{L^2}^2 + 2\int_{0}^{1}\left(\int_{x}^{1}p(x,y)^2dy \int_{x}^{1}\tilde{w}(y)^2dy\right) dx \nonumber\\
\leq&~ 4\left(1+|p|_{L^2}^2\right)V, \label{v_L2}
\end{align}
Consequently, 
\begin{align}
&\left|\int_{0}^{1}\tilde{w}_x(x)\left(\bar{\alpha}(v(x))-\bar{\alpha}(\hat{v}(x))\right)v_x(x)dx\right|\nonumber\\
&~\quad \leq \frac{1}{2}|v_x|_{\infty}L_{\alpha}(|\tilde{v}|_{\infty})|\tilde{w}_x|_{L^2}^2 \nonumber\\
&~\quad \quad + 2|v_x|_{\infty}\left(1+|p|_{L^2}^2\right)L_{\alpha}(|\tilde{v}|_{\infty})V.\label{first_2}
\end{align}
Next, note that we have 
\begin{align*}
&\left|\int_{0}^{1}\tilde{w}_x(x)\left(\bar{\alpha}(v(x))-\bar{\alpha}(\hat{v}(x))\right)\tilde{v}_x(x)dx\right| \\
&~\quad \leq L_{\alpha}(|\tilde{v}|_{\infty})|\tilde{v}|_{\infty}\int_{0}^{1}|\tilde{w}_x(x)\tilde{v}_x(x)|dx \\
&~\quad \leq \frac{1}{2}L_{\alpha}(|\tilde{v}|_{\infty})|\tilde{v}|_{\infty}\left(|\tilde{w}_x|_{L^2}^2+|\tilde{v}_x|_{L^2}^2\right).
\end{align*}
Differentiating both sides of \eqref{f_back} with respect to $x$, we obtain 
\begin{align}
\tilde{v}_x =\tilde{w}_x +p(x,x)\tilde{w}-\int_{x}^1p_x(x,y)\tilde{w}(y)dy.\label{v_x_f}
\end{align}
As a result, using the fact that $p(x,x)=-(\sigma/2a)x$, 
\begin{align}
|\tilde{v}_x|_{L^2}^2 =&~ \int_{0}^{1}\bigg(\tilde{w}_x(x) +p(x,x)\tilde{w}(x) \nonumber \\
&~-\int_{x}^1p_x(x,y)\tilde{w}(y)dy\bigg)^2dx\nonumber \\
\leq&~ 3|\tilde{w}_x|_{L^2}^2+\frac{3\sigma^2}{4a^2}|\tilde{w}|_{L^2}^2 \nonumber
\end{align}
\begin{align}
&~+3\int_{0}^{1}\left(\int_{x}^1p_x(x,y)\tilde{w}(y)dy\right)^2dx \nonumber \\
\leq&~3|\tilde{w}_x|_{L^2}^2+3\left(\frac{\sigma^2}{4a^2}+|p_x|_{L^2}^2\right)|\tilde{w}|_{L^2}^2 \nonumber\\
\leq&~3|\tilde{w}_x|_{L^2}^2+3\left(\frac{\sigma^2}{2a^2}+2|p_x|_{L^2}^2\right)V.\label{vx_L2_B}
\end{align}
Hence, we have 
\begin{align}
&\left|\int_{0}^{1}\tilde{w}_x(x)\left(\bar{\alpha}(v(x))-\bar{\alpha}(\hat{v}(x))\right)\tilde{v}_x(x)dx\right| \nonumber\\
&~\quad \leq 2L_{\alpha}(|\tilde{v}|_{\infty})|\tilde{v}|_{\infty}|\tilde{w}_x|_{L^2}^2 \nonumber\\
&~\quad \quad + \frac{3}{2}L_{\alpha}(|\tilde{v}|_{\infty})|\tilde{v}|_{\infty}\left(\frac{\sigma^2}{2a^2}+2|p_x|_{L^2}^2\right)V.\label{first_3}
\end{align}
Next, using the fact that $p_{10}=-(\sigma/2a)$, we obtain 
\begin{align*}
\left(\bar{\alpha}(v(1))-\bar{\alpha}(\hat{v}(1))\right)p_{10}\tilde{w}(1)^2 \leq \frac{\sigma}{2a}L_{\alpha}(|\tilde{v}(1)|)|\tilde{w}(1)|\tilde{w}(1)^2.
\end{align*}
Applying Agmon's inequality, note that we have 
\begin{align*}
\tilde{w}(1)^2 \leq |\tilde{w}|_{L^2}^2+2|\tilde{w}|_{L^2}|\tilde{w}_x|_{L^2}. 
\end{align*}
As a result, 
\begin{align}
\left(\bar{\alpha}(v(1))-\bar{\alpha}(\hat{v}(1))\right)p_{10}&\tilde{w}(1)^2 \leq \frac{2\sigma}{a}L_{\alpha}(|\tilde{v}(1)|)|\tilde{w}(1)| V \nonumber \\
&~+\frac{\sigma}{2a}L_{\alpha}(|\tilde{v}(1)|)|\tilde{w}(1)| |\tilde{w}_x|_{L^2}. \label{first_f}
\end{align}
Combining \eqref{first_1}, \eqref{first_2}, \eqref{first_3}, and \eqref{first_f}, we finally obtain 
\begin{align}
\bigg|\int_{0}^{1}&\tilde{w}(x)\big(\left(\bar{\alpha}(v(x))-\bar{\alpha}(\hat{v}(x))\right)(v_x(x)-\tilde{v}_x(x))\big)_x dx\bigg| \nonumber \\
\leq&~ \bigg\{ \frac{3}{2}\left(\frac{\sigma^2}{2a^2}+2|p_x|_{L^2}^2\right)L_{\alpha}(|\tilde{v}|_{\infty})|\tilde{v}|_{\infty} \nonumber \\
&~+2|v_x|_{\infty}\left(1+|p|_{L^2}^2\right)L_{\alpha}(|\tilde{v}|_{\infty}) \nonumber \\
&~+\frac{2\sigma}{a}L_{\alpha}(|\tilde{v}(1)|)|\tilde{w}(1)|\bigg\}V \nonumber \\
&~+\bigg\{2L_{\alpha}(|\tilde{v}|_{\infty})|\tilde{v}|_{\infty} +\frac{1}{2}|v_x|_{\infty}L_{\alpha}(|\tilde{v}|_{\infty}) \nonumber \\
&~+\frac{\sigma}{2a}L_{\alpha}(|\tilde{v}(1)|)|\tilde{w}(1)|\bigg\}|\tilde{w}_x|_{L^2}^2. \label{F_B_1}
\end{align}

$\circ$ \textit{Upperbound on $\int_{0}^{1}\tilde{w}\left( \bar{\alpha}(v)\tilde{v}_x\right)_xdx$}

Using integration by parts, note that we have 
\begin{align*}
\int_{0}^{1}\tilde{w}(x)\left( \bar{\alpha}(v(x))\tilde{v}_x(x)\right)_x&dx = \left[\tilde{w}(x)\bar{\alpha}(v(x))\tilde{v}_x(x)\right]_{x=0}^{x=1} \nonumber \\
&~-\int_{0}^{1}\tilde{w}_x(x)\bar{\alpha}(v(x))\tilde{v}_x(x)dx \nonumber \\
=&~\frac{\sigma}{2a}\bar{\alpha}(v(1))\tilde{w}(1)^2\nonumber \\
&~-\int_{0}^{1}\tilde{w}_x(x)\bar{\alpha}(v(x))\tilde{v}_x(x)dx. 
\end{align*}
By Agmon's inequality, we have 
\begin{align*}
\frac{\sigma}{2a}\bar{\alpha}(v(1))\tilde{w}(1)^2 \leq 2\sigma \frac{|\bar{\alpha}(v(1))|}{a}V + \frac{\sigma |\bar{\alpha}(v(1))|}{2a}|\tilde{w}_x|_{L^2}^2. 
\end{align*}
Moreover, by Young's inequality and \eqref{vx_L2_B}, we obtain 
\begin{align}
\bigg|\int_{0}^{1}\tilde{w}_x(x)\bar{\alpha}(v(x))\tilde{v}_x(x)&dx\bigg|\leq \frac{|\bar{\alpha}(v)|_{\infty}}{2}|\tilde{w}_x|_{L^2}^2 \nonumber \\&~+\frac{|\bar{\alpha}(v)|_{\infty}}{2}|\tilde{v}_x|_{L^2}^2 \nonumber \\
&\leq \frac{3}{2} \left(\frac{\sigma^2}{2a^2}+2|p_x|_{L^2}^2\right)|\bar{\alpha}(v)|_{\infty}V \nonumber \\
&~+2|\bar{\alpha}(v)|_{\infty}|\tilde{w}_x|_{L^2}^2. \nonumber 
\end{align}
As a result, 
\begin{align}
\bigg|\int_{0}^{1}&\tilde{w}(x)\left( \bar{\alpha}(v(x))\tilde{v}_x(x)\right)_xdx\bigg| \nonumber \\
\leq&~ \bigg\{2\sigma \frac{|\bar{\alpha}(v(1))|}{a}+\frac{3}{2} \left(\frac{\sigma^2}{2a^2}+2|p_x|_{L^2}^2\right)|\bar{\alpha}(v)|_{\infty}\bigg\}V \nonumber \\
&~+\bigg\{\frac{\sigma |\bar{\alpha}(v(1))|}{2a}+2|\bar{\alpha}(v)|_{\infty}\bigg\}|\tilde{w}_x|_{L^2}^2. \label{F_B_2}
\end{align}

$\circ$ \textit{Upperbound on $\int_{0}^{1}\tilde{w}l(x,x)\bar{\alpha}(v)\tilde{v}_xdx$}

Using the fact that $l(x,x)=p(x,x)=-(\sigma/(2a))x$, we have 
\begin{align*}
\bigg|\int_{0}^{1}&\tilde{w}(x)l(x,x)\bar{\alpha}(v(x))\tilde{v}_x(x)dx\bigg| \\
&~\leq \frac{\sigma}{2a}|\bar{\alpha}(v)|_{\infty}\int_{0}^{1}|\tilde{w}(x)| |\tilde{v}_x(x)|dx \\
&~\leq \frac{\sigma}{2a}|\bar{\alpha}(v)|_{\infty} \left(V+\frac{1}{2}|\tilde{v}_x|_{L^2}^2\right).
\end{align*}
Using \eqref{vx_L2_B}, we thus obtain 
\begin{align}
\bigg|\int_{0}^{1}&\tilde{w}(x)l(x,x)\bar{\alpha}(v(x))\tilde{v}_x(x)dx\bigg| \nonumber \\
\leq&~ \frac{\sigma}{2a}\left\{1+3\left(\left(\frac{\sigma}{2a}\right)^2+|p_x|_{L^2}^2\right)\right\}|\bar{\alpha}(v)|_{\infty}V \nonumber \\
&~+\frac{3\sigma}{4a}|\bar{\alpha}(v)|_{\infty}|\tilde{w}_x|_{L^2}^2. \label{F_B_3}
\end{align}

$\circ$ \textit{Upperbound on $\int_{0}^{1}\tilde{w}l(x,x)\left(\bar{\alpha}(v)-\bar{\alpha}(\hat{v})\right)(v_x-\tilde{v}_x)dx$}

Note that we have 
\begin{align*}
\bigg|\int_{0}^{1}&\tilde{w}(x)l(x,x)\left(\bar{\alpha}(v(x))-\bar{\alpha}(\hat{v}(x))\right)v_x(x)dx \bigg| \\
\leq&~\frac{\sigma}{2a}|v_x|_{\infty}L_{\alpha}(|\tilde{v}|_{\infty})\int_{0}^{1}|\tilde{w}(x)||\tilde{v}(x)|dx \\
\leq&~ \frac{\sigma}{2a}|v_x|_{\infty}L_{\alpha}(|\tilde{v}|_{\infty})\left( V+\frac{1}{2}|\tilde{v}|_{L^2}^2\right). 
\end{align*}
As a result, using \eqref{v_L2}, we obtain 
\begin{align}
\bigg|\int_{0}^{1}&\tilde{w}(x)l(x,x)\left(\bar{\alpha}(v(x))-\bar{\alpha}(\hat{v}(x))\right)v_x(x)dx \bigg| \nonumber \\
\leq&~ \frac{\sigma}{2a}|v_x|_{\infty}L_{\alpha}(|\tilde{v}|_{\infty})\left(1+2\left(1+|p|_{L^2}^2\right)\right)V. \label{F_B_4}
\end{align}
Similarly, we have 
\begin{align}
\bigg|\int_{0}^{1}&\tilde{w}(x)l(x,x)\left(\bar{\alpha}(v(x))-\bar{\alpha}(\hat{v}(x))\right)\tilde{v}_x(x)dx \bigg| \nonumber \\
\leq&~ \frac{\sigma}{2a}L_{\alpha}(|\tilde{v}|_{\infty})|\tilde{v}|_{\infty}\int_{0}^{1}|\tilde{w}(x)| |\tilde{v}_x(x)|dx \nonumber \\
\leq&~ \frac{\sigma}{2a}L_{\alpha}(|\tilde{v}|_{\infty})|\tilde{v}|_{\infty}\left(V+\frac{1}{2}|\tilde{v}_x|_{L^2}^2\right) \nonumber \\
\leq&~ \frac{\sigma}{2a}\left\{1+3\left(\left(\frac{\sigma}{2a}\right)^2+|p_x|_{L^2}^2\right)\right\}L_{\alpha}(|\tilde{v}|_{\infty})|\tilde{v}|_{\infty} V \nonumber \\
&~+\frac{3\sigma}{4a}L_{\alpha}(|\tilde{v}|_{\infty})|\tilde{v}|_{\infty}|\tilde{w}_x|_{L^2}^2. \label{F_B_5}
\end{align}

$\circ$ \textit{Upperbound on $\int_{0}^{1}\tilde{w}l_y(x,1)\left(\int_{\hat{v}(1)}^{v(1)}\bar{\alpha}(s)ds\right) dx$}

Since $\alpha$ is Lipschitz continuous on bounded sets, then, for each $\min\{v(1),\hat{v}(1)\}\leq s\leq \max\{v(1),\hat{v}(1)\}$, we have
\begin{align*}
|\bar{\alpha}(v(1))-\bar{\alpha}(s)|&\leq L_{\alpha}(|v(1)-s|)|v(1)-s| \\
&\leq  L_{\alpha}(|\tilde{v}(1)|) |\tilde{v}(1)|
\end{align*}
Hence,
\begin{align*}
|\bar{\alpha}(s)|&=|\bar{\alpha}(s)-\bar{\alpha}(v(1))+\bar{\alpha}(v(1))| \\
&\leq |\bar{\alpha}(s)-\bar{\alpha}(v(1))|+|\bar{\alpha}(v(1))| \\
&\leq L_{\alpha}(|\tilde{v}(1)|) |\tilde{v}(1)| + |\bar{\alpha}(v(1))|.
\end{align*}
As a result, 
\begin{align}
\bigg|\int_{\hat{v}(1)}^{v(1)}&\bar{\alpha}(s)ds\bigg| \nonumber \\
\leq&~ \int_{\min\{\hat{v}(1),v(1)\}}^{\max\{\hat{v}(1),v(1)\}}|\bar{\alpha}(s)|ds \nonumber \\
\leq&~ \int_{\min\{\hat{v}(1),v(1)\}}^{\max\{\hat{v}(1),v(1)\}}\left(L_{\alpha}(|\tilde{v}(1)|) |\tilde{v}(1)| + |\bar{\alpha}(v(1))|\right)ds \nonumber \\
\leq&~ L_{\alpha}(|\tilde{v}(1)|) |\tilde{v}(1)|^2+|\bar{\alpha}(v(1))||\tilde{v}(1)|.\label{int_bound}
\end{align}
Consequently, 
\begin{align*}
\bigg| \int_{0}^{1}&\tilde{w}(x)l_y(x,1)\left(\int_{\hat{v}(1)}^{v(1)}\bar{\alpha}(s)ds\right) dx\bigg| \\
\leq&~ \left( L_{\alpha}(|\tilde{v}(1)|) |\tilde{v}(1)|^2+|\bar{\alpha}(v(1))||\tilde{v}(1)|\right) \\
&\times \int_{0}^{1}|l_y(x,1)||\tilde{w}(x)|dx \\
\leq&~ \left( L_{\alpha}(|\tilde{v}(1)|) |\tilde{v}(1)|+|\bar{\alpha}(v(1))|\right) |l_y(\cdot,1)|_{L^2} |\tilde{v}(1)| \sqrt{2V}, 
\end{align*}
Using the fact that $\tilde{v}(1)=\tilde{w}(1)$ and applying Agmon's inequality, we obtain 
\begin{align}
|\tilde{v}(1)|\sqrt{2V} =&~ |\tilde{w}(1)|\sqrt{2V} \nonumber \\
\leq&~ \sqrt{2V \left(2V+2\sqrt{2V} |\tilde{w}_x|_{L^2}\right)} \nonumber \\
\leq&~\sqrt{2V\left(4V+|\tilde{w}_x|_{L^2}^2\right)} \nonumber \\
\leq&~ \sqrt{9V^2+|\tilde{w}_x|_{L^2}^4} \nonumber\\
\leq&~ 3V+|\tilde{w}_x|_{L^2}^2. \label{vV}
\end{align}
Hence, 
\begin{align}
&\bigg| \int_{0}^{1}\tilde{w}(x)l_y(x,1)\left(\int_{\hat{v}(1)}^{v(1)}\bar{\alpha}(s)ds\right) dx\bigg| \nonumber \\
&\leq 3\left( L_{\alpha}(|\tilde{v}(1)|) |\tilde{v}(1)|+|\bar{\alpha}(v(1))|\right)|l_y(\cdot,1)|_{L^2}V \nonumber \\
&+\left( L_{\alpha}(|\tilde{v}(1)|) |\tilde{v}(1)|+|\bar{\alpha}(v(1))|\right)|l_y(\cdot,1)|_{L^2}|\tilde{w}_x|_{L^2}^2. \label{F_B_6}
\end{align}

$\circ$ \textit{Upperbound on $\int_{0}^{1}\tilde{w}l_y(x,x)\left(\int_{\hat{v}}^{v}\bar{\alpha}(s)ds\right)dx$}

Following the exact same steps as in the proof of \eqref{int_bound}, we have 
\begin{align}
\bigg| \int_{\hat{v}(x)}^{v(x)}\bar{\alpha}(s)ds\bigg| \leq L_{\alpha}(|\tilde{v}(x)|) |\tilde{v}(x)|^2+|\bar{\alpha}(v(x))||\tilde{v}(x)|. \label{int_bound2}
\end{align}
As a result, using \eqref{v_L2}, we obtain 
\begin{align}
\bigg| \int_{0}^{1}&\tilde{w}(x)l_y(x,x)\left(\int_{\hat{v}(x)}^{v(x)}\bar{\alpha}(s)ds\right)dx\bigg| \nonumber
\end{align}
\begin{align}
\leq&~ \left(L_{\alpha}(|\tilde{v}|_{\infty}) |\tilde{v}|_{\infty}+|\bar{\alpha}(v)|_{\infty}\right) \nonumber \\
&~\times \int_{0}^{1}|l_y(x,x)||\tilde{w}(x)||\tilde{v}(x)|dx \nonumber \\
\leq&~ \left(L_{\alpha}(|\tilde{v}|_{\infty}) |\tilde{v}|_{\infty}+|\bar{\alpha}(v)|_{\infty}\right)\sup_{x\in (0,1)}\{|l_y(x,x)|\}  \nonumber \\
&~\times \int_{0}^{1}|\tilde{w}(x)\tilde{v}(x)|dx.\nonumber \\
\leq&~ \left(L_{\alpha}(|\tilde{v}|_{\infty}) |\tilde{v}|_{\infty}+|\bar{\alpha}(v)|_{\infty}\right)\sup_{x\in (0,1)}\{|l_y(x,x)|\} \nonumber \\
&~\times \left( V+\frac{1}{2}|\tilde{v}|_{L^2}^2\right) \nonumber \\
\leq&~ \sup_{x\in (0,1)}\{|l_y(x,x)|\}\left(1+2\left(1+|p|_{L^2}^2\right)\right) \nonumber \\
&~\times \left(L_{\alpha}(|\tilde{v}|_{\infty}) |\tilde{v}|_{\infty}+|\bar{\alpha}(v)|_{\infty}\right)V. \label{F_B_7}
\end{align}

$\circ$ \textit{Upperbound on $\bar{\alpha}(\hat{v}(1))p_{10}\tilde{v}(1)\int_{0}^{1}\tilde{w}l(x,1)dx$}

Using the fact that $p_{10}=-\sigma/(2a)$, $\tilde{v}(1)=\tilde{w}(1)$, and inequality \eqref{vV}, we have 
\begin{align}
\bigg|\bar{\alpha}(\hat{v}(1))&p_{10}\tilde{v}(1)\int_{0}^{1}\tilde{w}(x)l(x,1)dx\bigg| \nonumber
\end{align}
\begin{align}
\leq&~ \frac{\sigma}{2a} |\bar{\alpha}(v(1)-\tilde{v}(1))||\tilde{v}(1)||l(\cdot,1)|_{L^2}\sqrt{2V} \nonumber\\
\leq&~ \frac{3\sigma}{2a} |\bar{\alpha}(v(1)-\tilde{w}(1))||l(\cdot,1)|_{L^2}V \nonumber \\
&~+\frac{\sigma}{2a} |\bar{\alpha}(v(1)-\tilde{w}(1))||l(\cdot,1)|_{L^2}|\tilde{w}_x|_{L^2}^2.\nonumber 
\end{align}
Furthermore, note that we have 
\begin{align*}
|\bar{\alpha}(v(1)-\tilde{w}(1))-\bar{\alpha}(v(1))|\leq L_{\alpha}(|\tilde{w}(1)|)|\tilde{w}(1)|. 
\end{align*}
Hence, 
\begin{align*}
|\bar{\alpha}(v(1)-\tilde{w}(1))|\leq |\bar{\alpha}(v(1))|+L_{\alpha}(|\tilde{w}(1)|)|\tilde{w}(1)|.
\end{align*}
As a result, 
\begin{align}
&\bigg|\bar{\alpha}(\hat{v}(1))p_{10}\tilde{v}(1)\int_{0}^{1}\tilde{w}(x)l(x,1)dx\bigg|\nonumber\\
&\leq \frac{3\sigma}{2a} \left(|\bar{\alpha}(v(1))|+L_{\alpha}(|\tilde{w}(1)|)|\tilde{w}(1)|\right)|l(\cdot,1)|_{L^2}V \nonumber \\
&+\frac{\sigma}{2a} \left(|\bar{\alpha}(v(1))|+L_{\alpha}(|\tilde{w}(1)|)|\tilde{w}(1)|\right)|l(\cdot,1)|_{L^2}|\tilde{w}_x|_{L^2}^2.\label{F_B_8} 
\end{align}

$\circ$ \textit{Upperbound on $\int_{0}^{1}\tilde{w}\int_{x}^{1}l_{yy}(x,y)\int_{\hat{v}(y)}^{v(y)}\bar{\alpha}(s)dsdydx$}

Using \eqref{int_bound2}, note that we have 
\begin{align*}
\bigg|\int_{0}^{1}&\tilde{w}(x)\int_{x}^{1}l_{yy}(x,y)\int_{\hat{v}(y)}^{v(y)}\bar{\alpha}(s)dsdydx\bigg| \\
\leq&~ \int_{0}^{1}|\tilde{w}(x)|\int_{x}^{1}|l_{yy}(x,y)| \left|\int_{\hat{v}(y)}^{v(y)}\bar{\alpha}(s)ds\right|dydx
\end{align*}
\begin{align*}
\leq&~ \int_{0}^{1}|\tilde{w}(x)|\int_{x}^{1}|l_{yy}(x,y)| \bigg(L_{\alpha}(|\tilde{v}(y)|) |\tilde{v}(y)|^2 \\
&~+|\bar{\alpha}(v(y))||\tilde{v}(y)|\bigg)dydx \\
\leq&~ L_{\alpha}(|\tilde{v}|_{\infty})|\tilde{v}|_{\infty}\int_{0}^{1}|\tilde{w}(x)|\int_{x}^{1}|l_{yy}(x,y)| |\tilde{v}(y)|dydx \\
&~+ |\bar{\alpha}(v)|_{\infty}\int_{0}^{1}|\tilde{w}(x)|\int_{x}^{1}|l_{yy}(x,y)| |\tilde{v}(y)|dydx \\
\leq&~ L_{\alpha}(|\tilde{v}|_{\infty}) |\tilde{v}|_{\infty}  |\tilde{v}|_{L^2}\int_{0}^{1}|\tilde{w}(x)| \\
&~\times \left(\sqrt{\int_{x}^{1}l_{yy}(x,y)^2dy}\right)dx\\
&~+|\bar{\alpha}(v)|_{\infty} |\tilde{v}|_{L^2}\int_{0}^{1}|\tilde{w}(x)| \\
&~\times \left(\sqrt{\int_{x}^{1}l_{yy}(x,y)^2dy}\right)dx \\
\leq&~ \left(L_{\alpha}(|\tilde{v}|_{\infty}) |\tilde{v}|_{\infty}+|\bar{\alpha}(v)|_{\infty} \right)|l_{yy}|_{L^2}|\tilde{v}|_{L^2}\sqrt{2V}.
\end{align*}
Consequently, using \eqref{v_L2}, we conclude that 
\begin{align}
&\bigg|\int_{0}^{1}\tilde{w}(x)\int_{x}^{1}l_{yy}(x,y)\int_{\hat{v}(y)}^{v(y)}\bar{\alpha}(s)dsdydx\bigg| \nonumber \\
&\leq 2\sqrt{2}|l_{yy}|_{L^2}\left(\sqrt{1+|p|_{L^2}^2}\right) \nonumber \\
&~\times \left(L_{\alpha}(|\tilde{v}|_{\infty}) |\tilde{v}|_{\infty}+|\bar{\alpha}(v)|_{\infty} \right) V. \label{F_B_9}
\end{align}

Finally, combining \eqref{diff_g}, \eqref{F_B_1}, \eqref{F_B_2}, \eqref{F_B_3}, \eqref{F_B_4}, \eqref{F_B_5}, \eqref{F_B_6}, \eqref{F_B_7}, \eqref{F_B_8}, and \eqref{F_B_9}, we conclude on \eqref{To_Prove_1} by noting that 
\begin{align}
|\tilde{v}|_{\infty} \leq (1+|p|_{\infty})|\tilde{w}|_{\infty}. \label{v_max_bound}
\end{align}
\end{proof}

According to \eqref{To_Prove_1}, we need to analyze $|\tilde{w}|_{\infty}$ to study $V$. To this end, we augment $V$ with $|\tilde{w}_x|_{L^2}^2$, since Agmon's inequality provides an upperbound for the max norm of $\tilde{w}$ in terms of its $H^1$ norm. 

$\bullet$ \textit{Step 3: Differential Inequality on $t\mapsto |\tilde{w}_x(\cdot,t)|_{L^2}^2$}

We introduce the functional 
\begin{align}
H(\tilde{w}(\cdot,t)) := \frac{1}{2}|\tilde{w}_x(\cdot,t)|_{L^2}^2. \label{H_def}
\end{align}

In the next lemma, we derive an inequality on $\dot{H}$. 

\begin{lemma}
It holds that 
\begin{align}
&\dot{H} \leq \bigg\{-\frac{a}{2}+\gamma_7L_{\alpha}\bigg((1+|p|_{\infty})|\tilde{w}|_{\infty}\bigg)^2(1+|p|_{\infty})^2|\tilde{w}|_{\infty}^2 \nonumber\\
&+\frac{\gamma_7}{2}|\bar{\alpha}(v)|_{\infty}^2\bigg\}|\tilde{w}_{xx}|_{L^2}^2 + \bigg\{ -\sigma +\gamma_8L_{\alpha}\bigg((1+|p|_{\infty})|\tilde{w}|_{\infty}\bigg)^2\nonumber \\
&\times (1+|p|_{\infty})^2|\tilde{w}|_{\infty}^2+\gamma_9|\bar{\alpha}(v)|_{\infty}^2\bigg\}|\tilde{w}_x|_{L^2}^2 + \bigg\{\gamma_{10}|\bar{\alpha}(v)|_{\infty}^2 \nonumber \\
&+4\gamma_{10}L_{\alpha'}\bigg((1+|p|_{\infty})|\tilde{w}|_{\infty}\bigg)^2(1+|p|_{\infty})^2|\tilde{w}|_{\infty}^2\bigg\}|\tilde{w}|_{L^4}^4 \nonumber \\
&+\bigg\{ \gamma_{11}L_{\alpha'}\bigg((1+|p|_{\infty})|\tilde{w}|_{\infty}\bigg)^2(1+|p|_{\infty})^2|\tilde{w}|_{\infty}^2 \nonumber \\
&+\frac{\gamma_{11}}{4}|\bar{\alpha}(v)|_{\infty}^2\bigg\}|\tilde{w}_x|_{L^4}^4+\bigg\{\gamma_{12}L_{\alpha}\bigg((1+|p|_{\infty})|\tilde{w}|_{\infty}\bigg)^2 \nonumber \\
&+\gamma_{13}L_{\alpha'}\bigg((1+|p|_{\infty})|\tilde{w}|_{\infty}\bigg)^2+\gamma_{14}|\bar{\alpha}(v)|_{\infty}^2 \nonumber \\
&+\gamma_{15}L_{\alpha}\bigg((1+|p|_{\infty})|\tilde{w}|_{\infty}\bigg)^2(1+|p|_{\infty})^2|\tilde{w}|_{\infty}^2  \bigg\}V.\label{To_Prove_2}
\end{align}
\end{lemma}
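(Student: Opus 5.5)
Differentiate $H$ in time, integrate by parts using the target system $\Sigma_{\tilde w}$, and then bound the resulting forcing integral term by term, as in the proof of \eqref{To_Prove_1}. Since $\tilde w_x(0)=\tilde w_x(1)=0$ for all $t$, we also have $\tilde w_{xt}(0)=\tilde w_{xt}(1)=0$. Hence
\begin{align*}
\dot H=\int_0^1\tilde w_x\tilde w_{xt}\,dx=-\int_0^1\tilde w_{xx}\tilde w_t\,dx
=-a|\tilde w_{xx}|_{L^2}^2-\sigma|\tilde w_x|_{L^2}^2-\int_0^1\tilde w_{xx}f\,dx .
\end{align*}
The whole task is to bound $\int_0^1\tilde w_{xx}f\,dx$. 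Of the $a|\tilde w_{xx}|_{L^2}^2$ term, we allow only $\tfrac a2$ to be consumed by Young's inequality; this is where the coefficient $-\tfrac a2$ in \eqref{To_Prove_2} comes from. No integration by parts should be applied to $f$ here: it contains $(\bar\alpha(v)v_x)_x-(\bar\alpha(\hat v)\hat v_x)_x$, which we keep and pair with $\tilde w_{xx}$ in $L^2$.

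For the expansion of $f$, reuse the decomposition obtained in the previous lemma, but differentiate the local terms explicitly. The principal term is $(\bar\alpha(v)\tilde v_x)_x=\bar\alpha(v)\tilde v_{xx}+\alpha'(v)v_x\tilde v_x$. Write $\alpha(v)-\alpha(\hat v)=\int_{\hat v}^{v}\alpha'$, bound $\alpha'(v)-\alpha'(\hat v)$ by $L_{\alpha'}|\tilde v|$, and use the identity $\hat v=v-\tilde v$. With these, the quasilinear term
\[
\big((\bar\alpha(v)-\bar\alpha(\hat v))(v_x-\tilde v_x)\big)_x
\]
splits into four kinds of pieces:
\begin{itemize}
\item pieces of the form $(\alpha(v)-\alpha(\hat v))(v_{xx}-\tilde v_{xx})$, bounded by $L_\alpha|\tilde v|_\infty$ times $\|v_{xx}\|_\infty+|\tilde v_{xx}|$;
\item pieces $(\alpha'(v)v_x-\alpha'(\hat v)\hat v_x)(v_x-\tilde v_x)$, which generate $L_{\alpha'}|\tilde v|\,|v_x|^2$, $L_\alpha|\tilde v_x||v_x|$ and $\tilde v_x^2$ terms;
\item the quadratic $\tilde v_x^2$ terms, which are the source of the $|\tilde w_x|_{L^4}^4$ contribution after squaring;
\item the $\tilde v^2$ terms, which are the source of $|\tilde w|_{L^4}^4$.
\end{itemize}
The nonlocal terms, namely those involving $l(x,x)$, $l_y(x,1)$, $l_y(x,x)$, $l(x,1)p_{10}\tilde v(1)$ and $\int l_{yy}$, are treated exactly as before: use \eqref{int_bound}, \eqref{int_bound2}, $\tilde v(1)=\tilde w(1)$ and Agmon's inequality. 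Each product $\tilde w_{xx}\cdot(\text{term})$ is then split by Young's inequality as $\tfrac{\epsilon}{2}\tilde w_{xx}^2+\tfrac1{2\epsilon}(\text{term})^2$. This produces the squared factors $L_\alpha(\cdot)^2|\tilde v|_\infty^2$ and $|\bar\alpha(v)|_\infty^2$ that appear in \eqref{To_Prove_2}.

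The remaining step converts everything back to $\tilde w$. This uses \eqref{v_L2}, \eqref{vx_L2_B} and \eqref{v_max_bound}, together with a second-derivative analogue obtained by differentiating \eqref{v_x_f} once more:
\[
\tilde v_{xx}=\tilde w_{xx}+2\tfrac{d}{dx}p(x,x)\,\tilde w+p(x,x)\tilde w_x+p_x(x,x)\tilde w-\int_x^1p_{xx}\tilde w\,dy .
\]
This identity gives
\[
|\tilde v_{xx}|_{L^2}^2\le c_1|\tilde w_{xx}|_{L^2}^2+c_2|\tilde w_x|_{L^2}^2+c_3V,
\]
with $c_i$ depending on $\sigma/a$ and on the kernel norms. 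Similarly, $|\tilde v_x|_{L^4}^4$ and $|\tilde v|_{L^4}^4$ are bounded by $|\tilde w_x|_{L^4}^4$, $|\tilde w|_{L^4}^4$ and kernel constants, via Hölder's inequality on the Volterra terms and $(a+b+c)^4\le27(a^4+b^4+c^4)$. Collecting coefficients then yields the constants $\gamma_7,\dots,\gamma_{15}$.

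The main obstacle is the top-order term $\bar\alpha(v)\tilde v_{xx}\tilde w_{xx}$, together with its companion $(\alpha(v)-\alpha(\hat v))\tilde v_{xx}\tilde w_{xx}$. Neither term carries a smallness factor other than $\bar\alpha$ or $L_\alpha|\tilde v|_\infty$, and each must be absorbed into $-\tfrac a2|\tilde w_{xx}|_{L^2}^2$. I would use Young's inequality with weights chosen so that these terms produce exactly the coefficients $\tfrac{\gamma_7}{2}|\bar\alpha(v)|_\infty^2$ and $\gamma_7L_\alpha^2(1+|p|_\infty)^2|\tilde w|_\infty^2$ on $|\tilde w_{xx}|^2_{L^2}$. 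Concretely, this means splitting $\tfrac a2$ off first and then bounding $\tilde v_{xx}$ by $\tilde w_{xx}$ via the identity above. The resulting coefficient is quadratic rather than linear in $\delta_1$, which is why condition \eqref{cond_3} reads $\tfrac{\gamma_7}{2}\delta_1^2<\tfrac a2$. A secondary difficulty is bookkeeping the boundary contributions: because no integration by parts is done on $f$ in this step, none arise except through the nonlocal $\tilde w(1)$ terms, which Agmon's inequality handles as before.
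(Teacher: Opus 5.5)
Your skeleton matches the paper's: the energy identity, Young's inequality keeping $-\frac{a}{2}|\tilde w_{xx}|^2_{L^2}$, a term-by-term bound on the forcing, and conversion from $\tilde v$ to $\tilde w$. Your last step, ``collecting coefficients then yields $\gamma_7,\dots,\gamma_{15}$'', does not go through, for two reasons.

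\textbf{The missing reduction of $f$.} You do not use the reduction on which the paper's estimate rests. You reuse the form of $f$ from the $V$-lemma, which was obtained by integrating the $l$-integrals by parts in $y$; this sits oddly with your own remark that no integration by parts should be applied to $f$. The paper instead returns to \eqref{f_def} and writes $f=g+\int_x^1 l(x,y)g(y)\,dy$, with the purely local $g=(\bar\alpha(v)v_x)_x-(\bar\alpha(\hat v)\hat v_x)_x$. Cauchy--Schwarz then gives $|f|^2_{L^2}\le 2(1+|l|^2_{L^2})|g|^2_{L^2}$. Together with $\dot H\le-\frac{a}{2}|\tilde w_{xx}|^2_{L^2}-\sigma|\tilde w_x|^2_{L^2}+\frac{1}{2a}|f|^2_{L^2}$, everything reduces to bounding the five local pieces of $g$. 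This is why each of $\gamma_7,\dots,\gamma_{15}$ carries the factor $1+|l|^2_{L^2}$ and otherwise involves only $\sigma/a$, norms of $p$, $\|v_x\|_\infty$ and $\|v_{xx}\|_\infty$. The $y$-integration by parts was needed in the $V$-lemma only because $\tilde w_{xx}$ is absent from that dissipation. Here it is present, so $g\in L^2$ can be handled directly. Your route through the $l(x,x)$, $l_y(x,1)$, $l_y(x,x)$, $l(x,1)p_{10}\tilde v(1)$ and $l_{yy}$ terms brings in the trace $\tilde w(1)$ and the norms $|l_y(\cdot,1)|_{L^2}$, $|l(\cdot,1)|_{L^2}$, $\sup_x|l_y(x,x)|$ and $|l_{yy}|_{L^2}$. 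None of these appear in $\gamma_7,\dots,\gamma_{15}$, so at best you obtain an inequality of the same shape with different constants.

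\textbf{The $\alpha'$ terms.} More seriously, the terms you correctly produce with a factor $\alpha'$ have no place in \eqref{To_Prove_2}. Your expansion contains $\alpha'(v)v_x\tilde v_x+\alpha'(\hat v)\hat v_x\tilde v_x$ and $-\alpha'(\hat v)\tilde v_x^2$. Their coefficients are neither controlled by $|\bar\alpha(v)|_\infty$ nor small when $\tilde w$ is small; for $\alpha(r)=a+\epsilon\sin(r/\epsilon)$ one has $\delta_1\le\epsilon$ while $\alpha'=O(1)$. After squaring and using \eqref{vx_L2_B}, they contribute a term of order $\frac{1}{a}(1+|l|^2_{L^2})\|v_x\|^2_\infty\sup_{|r|\le\|v\|_\infty+|\tilde v|_\infty}\alpha'(r)^2$ to the coefficients of $|\tilde w_x|^2_{L^2}$ and $V$, plus $O(1)$ factors in front of the $L^4$ terms. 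The stated coefficient of $|\tilde w_x|^2_{L^2}$ is $-\sigma+\gamma_8L_\alpha(\cdot)^2(1+|p|_\infty)^2|\tilde w|^2_\infty+\gamma_9|\bar\alpha(v)|^2_\infty$. Your bound ``$L_\alpha|\tilde v_x||v_x|$'' cannot be placed in the $\gamma_8$ slot for two reasons:
\begin{enumerate}
\item It lacks the factor $|\tilde w|^2_\infty$.
\item Bounding $\alpha'(\hat v)$ requires the Lipschitz constant on the range of $\hat v$, i.e.\ at argument $\|v\|_\infty+|\tilde v|_\infty$, not $(1+|p|_\infty)|\tilde w|_\infty$. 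The same caveat applies to every Lipschitz bound you use on $\alpha(v)-\alpha(\hat v)$ or $\alpha'(v)-\alpha'(\hat v)$.
\end{enumerate}
In the paper's proof, these terms end up in the $\|v_x\|^2_\infty$ parts of $\gamma_9$ and $\gamma_{14}$, and in the $|\bar\alpha(v)|^2_\infty$ coefficients of $|\tilde w|^4_{L^4}$ and $|\tilde w_x|^4_{L^4}$. They fit there only because, in the paper's expansion of $g$, the factor multiplying $v_x^2-\hat v_x^2=2v_x\tilde v_x-\tilde v_x^2$ is written as $\bar\alpha(v)$ instead of $\bar\alpha'(v)=\alpha'(v)$, as in your computation. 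So your plan cannot close at \eqref{To_Prove_2} as stated. You would have to either add these $\alpha'$-dependent terms or give an argument for dropping them, and no such argument is available in general.

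\textbf{Two smaller slips.} In your formula for $\tilde v_{xx}$ the factor $2$ is spurious. The coefficient of $\tilde w$ is $\frac{d}{dx}p(x,x)+p_x(x,x)=p_x(x,x)-\frac{\sigma}{2a}$, which is exactly what enters $\gamma_{14}$ and $\gamma_{15}$. Also, in $\int_0^1\tilde w_x\tilde w_{xt}\,dx=-\int_0^1\tilde w_{xx}\tilde w_t\,dx$ the boundary term $[\tilde w_x\tilde w_t]_0^1$ vanishes because $\tilde w_x=0$ at the endpoints, not because $\tilde w_{xt}=0$. Since $\tilde w_{xt}$ need not exist for the classical solutions considered, the paper justifies this identity by a time-averaging argument.
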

\begin{proof}
Suppose that $\tilde{w}_{xt}$ exists and is continuous. Then, one can differentiate $H$ with respect to $t$ to obtain 
\begin{align*}
\dot{H} = \int_{0}^{1}\tilde{w}_x(x)\tilde{w}_{xt}(x)dx. 
\end{align*}
Applying integration by parts and the boundary conditions $\tilde{w}_x(0)=\tilde{w}_x(1)=0$, we obtain 
\begin{align}
\dot{H} &= \left[\tilde{w}_x(x)\tilde{w}_t(x)\right]_{x=0}^{x=1}-\int_{0}^{1}\tilde{w}_{xx}(x)\tilde{w}_t(x)dx \nonumber \\
&= -\int_{0}^{1}\tilde{w}_{xx}(x)\tilde{w}_t(x)dx. \label{H_dot}
\end{align}
Now, $\tilde{w}$ may not be regular enough for $\tilde{w}_{xt}$ to be continuous. Equation \eqref{H_dot} still holds in this case, and it can be established using the classical techniques in \cite{ladyzen}. Specifically, we consider the map $(t,h)\mapsto H_h(t)$ defined for $(t,h)\in [0,+\infty)\times (0,+\infty)$ by
$$ H_h(t) := \frac{1}{2}\left|\frac{1}{h}\int_{t}^{t+h}\tilde{w}_x(\cdot,\tau)d\tau \right|_{L^2}^2. $$ 
Using l'Hôpital rule and the Lebesgue dominated convergence theorem, we first show that 
\begin{align}
\lim_{h\to 0^-} H_h(t) = \frac{1}{2}|\tilde{w}_x(\cdot,t)|_{L^2}^2 \quad \forall t\geq 0. \label{37}
\end{align}
Next, we evaluate $\dot{H}_h(t) := \frac{\partial }{\partial t}H_h(t)$ for all $t\geq 0$, and obtain
\begin{align}
&\dot{H}_h(t) = \bigg[\bigg(\frac{\int_{t}^{t+h}\tilde{w}_x(x,\tau)d\tau}{h}\bigg) \bigg(\frac{\tilde{w}(x,t+h)-\tilde{w}(x,t)}{h}\bigg)\bigg]_{0}^{1} \nonumber \\
&~- \int_{0}^{1}\bigg(\frac{\int_{t}^{t+h}\tilde{w}_{xx}(x,\tau)d\tau}{h}\bigg) \bigg(\frac{\tilde{w}(x,t+h)-\tilde{w}(x,t)}{h}\bigg)dx. \nonumber
\end{align}
Integrating the latter identity with respect to $t$, from $0$ to any $t\geq 0$, and taking the limit as $h\to 0^-$ while using \eqref{37} and the Lebesgue dominated convergence theorem, we can show that
\begin{align}
\frac{1}{2}|\tilde{w}_x(\cdot,t)|_{L^2}^2 & = \int_{0}^{t}\big[\tilde{w}_x(x,s) \tilde{w}_t(x,s)\big]_{0}^{1}ds +\frac{1}{2}|\tilde{w}_o'|_{L^2}^2 \nonumber \\
&~- \int_{0}^{t}\int_{0}^{1}\tilde{w}_{xx}(x,s)\tilde{w}_t(x,s)dxds \quad \forall t\geq 0. \nonumber 
\end{align}
As a result, the map 
$t\mapsto |\tilde{w}_x(\cdot,t)|_{L^2}^2$ is differentiable on $[0,+\infty)$ and \eqref{H_dot} holds.

Next, we use \eqref{H_dot} to derive an upperbound on $\dot{H}$. To do so, first note that we have 
\begin{align*}
\dot{H} =&~ - \int_{0}^{1}\tilde{w}_{xx}(x)\left(a\tilde{w}_{xx}(x)-\sigma \tilde{w}(x)+f(x)\right)dx \\
=&~ - a|\tilde{w}_{xx}|_{L^2}^2 + \sigma \left[\tilde{w}_x(x)\tilde{w}(x) \right]_{x=0}^{x=1}-\sigma |\tilde{w}_x|_{L^2}^2 \nonumber \\
&~-\int_{0}^{1}\tilde{w}_{xx}(x)f(x)dx \\
=&~ - a |\tilde{w}_{xx}|_{L^2}^2-\sigma |\tilde{w}_x|_{L^2}^2-\int_{0}^{1}\tilde{w}_{xx}(x)f(x)dx. 
\end{align*}
Applying Young's inequality, we obtain 
\begin{align}
\dot{H} \leq - \frac{a}{2}|\tilde{w}_{xx}|_{L^2}^2-\sigma |\tilde{w}_x|_{L^2}^2 + \frac{1}{2a}|f|_{L^2}^2. \label{g_dot}
\end{align}
Hence, to upperbound $\dot{H}$, it remains to upperbound $|f|_{L^2}^2$. To this end, note that, according to \eqref{f_def}, we have 
\begin{align*}
f(x) = g(x) + \int_{x}^{1}l(x,y)g(y)dy, 
\end{align*}
where 
\begin{align*}
g(x) := (\bar{\alpha}(v(x))v_x(x))_x-(\bar{\alpha}(\hat{v}(x))\hat{v}_x(x))_x. 
\end{align*}
As a result, applying the Cauchy-Schwarz inequality, we obtain  
\begin{align}
|f|_{L^2}^2 \leq 2\left(1+|l|_{L^2}^2\right)|g|_{L^2}^2. \label{g_bound_0}
\end{align}
Therefore, to upperbound $|f|_{L^2}^2$, it is sufficient to upperbound $|g|_{L^2}^2$. We start by rewriting $g$ as 
\begin{align*}
g =&~ \bar{\alpha}'(v)v_x^2+\bar{\alpha}(v)v_{xx}-\bar{\alpha}'(\hat{v})\hat{v}_x^2-\bar{\alpha}(\hat{v})\hat{v}_{xx}\\
=&~ \left(\bar{\alpha}(v)-\bar{\alpha}(\hat{v})\right)\hat{v}_{xx}+\bar{\alpha}(v)\tilde{v}_{xx} + \bar{\alpha}'(v)v_{x}^2-\bar{\alpha}'(\hat{v})\hat{v}_x^2 \\
=&~ \left(\bar{\alpha}(v)-\bar{\alpha}(\hat{v})\right)\hat{v}_{xx}+\bar{\alpha}(v)\tilde{v}_{xx}+\left( \bar{\alpha}'(v)-\bar{\alpha}'(\hat{v})\right)\hat{v}_x^2 \\
&~+\bar{\alpha}(v)(v_x^2-\hat{v}_x^2) \\
&=\left(\bar{\alpha}(v)-\bar{\alpha}(\hat{v})\right)\left(v_{xx}-\tilde{v}_{xx}\right)+\bar{\alpha}(v)\tilde{v}_{xx}\\
&+\left( \bar{\alpha}'(v)-\bar{\alpha}'(\hat{v})\right)\left(v_x-\tilde{v}_x\right)^2 +2v_x\bar{\alpha}(v)\tilde{v}_x-\bar{\alpha}(v)\tilde{v}_x^2.
\end{align*}
Now, we upperbound each term in the expression of $g$. 

$\circ$ \textit{Upperbound on $|\left(\bar{\alpha}(v)-\bar{\alpha}(\hat{v})\right)\left(v_{xx}-\tilde{v}_{xx}\right)|_{L^2}^2$}

Note that we have 
\begin{align*}
|\left(\bar{\alpha}(v)-\bar{\alpha}(\hat{v})\right)\left(v_{xx}-\tilde{v}_{xx}\right)|_{L^2}^2& \leq 2|\left(\bar{\alpha}(v)-\bar{\alpha}(\hat{v})\right)v_{xx}|_{L^2}^2 \\
&+2|\left(\bar{\alpha}(v)-\bar{\alpha}(\hat{v})\right)\tilde{v}_{xx}|_{L^2}^2. 
\end{align*}
As a result, 
\begin{align*}
|\left(\bar{\alpha}(v)-\bar{\alpha}(\hat{v})\right)\left(v_{xx}-\tilde{v}_{xx}\right)|_{L^2}^2& \leq 2|v_{xx}|_{\infty}^2L_{\alpha}(|\tilde{v}|_{\infty})^2|\tilde{v}|_{L^2}^2 \\
&+2L_{\alpha}(|\tilde{v}|_{\infty})^2|\tilde{v}|_{\infty}^2|\tilde{v}_{xx}|_{L^2}^2. 
\end{align*}
Differentiating \eqref{f_back} twice with respect to $x$ and using the fact that $p(x,x) = -(\sigma/2a) x$, we obtain 
\begin{align*}
\tilde{v}_{xx} =&~ \left(\tilde{w}_x +p(x,x)\tilde{w}-\int_{x}^1p_x(x,y)\tilde{w}(y)dy\right)_x
\end{align*}
\begin{align*}
=&~ \tilde{w}_{xx}+\left(\frac{d}{dx}p(x,x)+p_x(x,x)\right)\tilde{w}+p(x,x)\tilde{w}_x \\
&~-\int_{x}^1 p_{xx}(x,y)\tilde{w}(y)dy \\
=&~\tilde{w}_{xx}+\left(p_x(x,x)-\frac{\sigma}{2a}\right)\tilde{w}-\frac{\sigma}{2a}x\tilde{w}_x \\
&~-\int_{x}^1 p_{xx}(x,y)\tilde{w}(y)dy.
\end{align*}
Consequently, we have
\begin{align}
|\tilde{v}_{xx}|_{L^2}^2 \leq&~ 4 \left[|p_{xx}|_{L^2}^2+\sup_{x\in (0,1)}\left\{\left( p_x(x,x)-\frac{\sigma}{2a}\right)^2\right\} \right]V \nonumber \\
&~+\frac{\sigma^2}{2a^2}|\tilde{w}_x|_{L^2}^2+2|\tilde{w}_{xx}|_{L^2}^2. \label{v_xx_b}
\end{align}
Using \eqref{v_L2} and \eqref{v_xx_b}, we obtain 
\begin{align}
|\big(\bar{\alpha}(v)&-\bar{\alpha}(\hat{v})\big)\big(v_{xx}-\tilde{v}_{xx}\big)|_{L^2}^2 \nonumber \\
\leq&~ \bigg\{ 8|v_{xx}|_{\infty}^2\left(1+|p|_{L^2}^2\right)L_{\alpha}(|\tilde{v}|_{\infty})^2 \nonumber \\
&~+8\left[|p_{xx}|_{L^2}^2+\sup_{x\in (0,1)}\left\{\left( p_x(x,x)-\frac{\sigma}{2a}\right)^2\right\} \right] \nonumber \\
&~\times L_{\alpha}(|\tilde{v}|_{\infty})^2|\tilde{v}|_{\infty}^2\bigg\}V+\left(\frac{\sigma}{a}L_{\alpha}(|\tilde{v}|_{\infty})|\tilde{v}|_{\infty}\right)^2|\tilde{w}_x|_{L^2}^2 \nonumber \\
&~+\left(2L_{\alpha}(|\tilde{v}|_{\infty})|\tilde{v}|_{\infty}\right)^2|\tilde{w}_{xx}|_{L^2}^2. \label{g_bound_1}
\end{align}

$\circ$ \textit{Upperbound on $|\bar{\alpha}(v)\tilde{v}_{xx}|_{L^2}^2$}

We have 
\begin{align*}
|\bar{\alpha}(v)\tilde{v}_{xx}|_{L^2}^2\leq |\bar{\alpha}(v)|_{\infty}^2|\tilde{v}_{xx}|_{L^2}^2.
\end{align*}
As a result, using \eqref{v_xx_b}, we obtain 
\begin{align}
|\bar{\alpha}(v)&\tilde{v}_{xx}|_{L^2}^2\nonumber \\
\leq&~4 \left[|p_{xx}|_{L^2}^2+\sup_{x\in (0,1)}\left\{\left( p_x(x,x)-\frac{\sigma}{2a}\right)^2\right\} \right] |\bar{\alpha}(v)|_{\infty}^2V \nonumber \\
&~+\frac{\sigma^2}{2a^2}|\bar{\alpha}(v)|_{\infty}^2|\tilde{w}_x|_{L^2}^2+2|\bar{\alpha}(v)|_{\infty}^2|\tilde{w}_{xx}|_{L^2}^2.\label{g_bound_2}
\end{align}

$\circ$ \textit{Upperbound on $|\left( \bar{\alpha}'(v)-\bar{\alpha}'(\hat{v})\right)\left(v_x-\tilde{v}_x\right)^2|_{L^2}^2$}

Note that we have 
\begin{align*}
|\big( \bar{\alpha}'(v)&-\bar{\alpha}'(\hat{v})\big)\big(v_x-\tilde{v}_x\big)^2|_{L^2}^2 \\
\leq&~4|\big( \bar{\alpha}'(v)-\bar{\alpha}'(\hat{v})\big)v_x^2|_{L^2}^2+4|\big( \bar{\alpha}'(v)-\bar{\alpha}'(\hat{v})\big)\tilde{v}_x^2|_{L^2}^2 \\
\leq&~\left(2|v_x|_{\infty}^2L_{\alpha'}(|\tilde{v}|_{\infty})\right)^2|\tilde{v}|_{L^2}^2 \\
&~+\left(2L_{\alpha'}(|\tilde{v}|_{\infty})|\tilde{v}|_{\infty}\right)^2|\tilde{v}_x|_{L^4}^4. 
\end{align*}
Taking the $L^4$ norm at both sides of \eqref{v_x_f}, using the fact that $p(x,x)=-(\sigma/2a)x$, and applying the Cauchy-Schwarz inequality twice, we obtain 
\begin{align}
|\tilde{v}_x|_{L^4}^4 \leq&~ 4|\tilde{w}_x|_{L^4}^4+\frac{1}{4}\left(\frac{\sigma}{a}\right)^4|\tilde{w}|_{L^4}^4  \nonumber \\
&~+ 4\int_{0}^{1}\left(\int_{x}^{1}p_x(x,y)\tilde{w}(y)dy\right)^4dx \nonumber \\
\leq&~ 4|\tilde{w}_x|_{L^4}^4+\frac{1}{4}\left(\frac{\sigma}{a}\right)^4|\tilde{w}|_{L^4}^4  \nonumber \\
&~+ 4\int_{0}^{1}\left(\int_{x}^{1}p_x(x,y)^2dy\right)^2\left(\int_{x}^{1}\tilde{w}(y)^2dy\right)^2dx \nonumber \\
\leq&~4|\tilde{w}_x|_{L^4}^4+\left\{\frac{1}{4}\left(\frac{\sigma}{a}\right)^4+4|p_x|_{L^4}^4\right\}|\tilde{w}|_{L^4}^4.\label{v_x_4_b}
\end{align}
Consequently, using \eqref{v_L2} and \eqref{v_x_4_b}, we have 
\begin{align}
|\big( \bar{\alpha}'(v)&-\bar{\alpha}'(\hat{v})\big)\big(v_x-\tilde{v}_x\big)^2|_{L^2}^2 \nonumber \\
\leq&~\left(1+|p|_{L^2}^2\right)\left(4|v_x|_{\infty}^2L_{\alpha'}(|\tilde{v}|_{\infty})\right)^2V \nonumber \\
&~+\left\{\frac{1}{4}\left(\frac{\sigma}{a}\right)^4+4|p_x|_{L^4}^4\right\}\left(2L_{\alpha'}(|\tilde{v}|_{\infty})|\tilde{v}|_{\infty}\right)^2|\tilde{w}|_{L^4}^4 \nonumber \\
&~+\left(4L_{\alpha'}(|\tilde{v}|_{\infty})|\tilde{v}|_{\infty}\right)^2|\tilde{w}_x|_{L^4}^4. \label{g_bound_3}
\end{align}

$\circ$ \textit{Upperbound on $|2v_x\bar{\alpha}(v)\tilde{v}_x|_{L^2}^2$}

We have 
\begin{align*}
|2v_x\bar{\alpha}(v)\tilde{v}_x|_{L^2}^2 \leq \left(2|v_x|_{\infty} |\bar{\alpha}(v)|_{\infty}\right)^2|\tilde{v}_x|_{L^2}^2. 
\end{align*}
Using \eqref{vx_L2_B}, we obtain 
\begin{align}
|2v_x\bar{\alpha}(v)\tilde{v}_x|_{L^2}^2 \leq&~ 12|v_x|_{\infty}^2\left(\frac{\sigma^2}{2a^2}+2|p_x|_{L^2}^2\right)|\bar{\alpha}(v)|_{\infty}^2V \nonumber \\
&~+12|v_x|_{\infty}^2|\bar{\alpha}(v)|_{\infty}^2|\tilde{w}_x|_{L^2}^2. \label{g_bound_4}
\end{align}

$\circ$ \textit{Upperbound on $|\bar{\alpha}(v)\tilde{v}_x^2|_{L^2}^2$}

We have 
\begin{align*}
|\bar{\alpha}(v)\tilde{v}_x^2|_{L^2}^2 \leq |\bar{\alpha}(v)|_{\infty}^2|\tilde{v}_x|_{L^4}^4.
\end{align*}
As a result, using \eqref{v_x_4_b}, we obtain 
\begin{align}
|\bar{\alpha}(v)\tilde{v}_x^2|_{L^2}^2 \leq&~ \left\{\frac{1}{4}\left(\frac{\sigma}{a}\right)^4+4|p_x|_{L^4}^4\right\}|\bar{\alpha}(v)|_{\infty}^2|\tilde{w}|_{L^4}^4 \nonumber \\
&~+4|\bar{\alpha}(v)|_{\infty}^2|\tilde{w}_x|_{L^4}^4. \label{g_bound_5}
\end{align}

Finally, combining \eqref{g_dot}, \eqref{g_bound_0}, \eqref{g_bound_1}, \eqref{g_bound_2}, \eqref{g_bound_3}, \eqref{g_bound_4}, \eqref{g_bound_5}, and \eqref{v_max_bound}, we conclude on \eqref{To_Prove_2}.  
\end{proof}
 
$\bullet$ \textit{Step 4: Differential Inequality on $E:=V+H$}

Let $E(\tilde{w}):=V(\tilde{w})+H(\tilde{w})$. Using \eqref{To_Prove_1} and \eqref{To_Prove_2}, we will derive an inequality on $\dot{E}$. To this end, first note that, by Agmon's inequality, we have 
\begin{align}
|\tilde{w}|_{\infty} \leq 2\sqrt{E}. \label{agmon_max}
\end{align}
Consequently, since the maps $s\mapsto L_{\alpha}(s)$ and $s\mapsto L_{\alpha'}(s)$ are nondecreasing, then 
\begin{align*}
&L_{\alpha}\bigg(\left(1+|p|_{\infty}\right)|\tilde{w}|_{\infty}\bigg) \leq L_{\alpha}\bigg(2\left(1+|p|_{\infty}\right)\sqrt{E}\bigg), \\
&L_{\alpha'}\bigg(\left(1+|p|_{\infty}\right)|\tilde{w}|_{\infty}\bigg) \leq L_{\alpha'}\bigg(2\left(1+|p|_{\infty}\right)\sqrt{E}\bigg).
\end{align*}
As a result, we have  
\begin{align}
\dot{E} \leq&~ \bigg(-2\sigma + \varepsilon_1(E) + \gamma_3|\bar{\alpha}(v(1))|+\gamma_1|\bar{\alpha}(v)|_{\infty} \nonumber \\
&~+\gamma_{14}|\bar{\alpha}(v)|_{\infty}^2\bigg)V +\left(\varepsilon_2(E)+\gamma_{10}|\bar{\alpha}(v)|_{\infty}^2\right)|\tilde{w}|_{L^4}^4\nonumber \\
&~+\big(-a-\sigma + \varepsilon_3(E) \nonumber \\
&~+\gamma_6 |\bar{\alpha}(v(1))|+\gamma_4|\bar{\alpha}(v)|_{\infty}+\gamma_9|\bar{\alpha}(v)|_{\infty}^2\big)|\tilde{w}_x|_{L^2}^2 \nonumber \\
&~+\left(\varepsilon_4(E)+\frac{\gamma_{11}}{4}|\bar{\alpha}(v)|_{\infty}^2\right)|\tilde{w}_x|_{L^4}^4 \nonumber \\
&~+ \left(-\frac{a}{2}+\varepsilon_5(E)+\frac{\gamma_7}{2}|\bar{\alpha}(v)|_{\infty}^2\right)|\tilde{w}_{xx}|_{L^2}^2,  \label{E_dot_1}
\end{align}
Next, note that we have 
\begin{align}
|\tilde{w}|_{L^4}^4 &\leq 2|\tilde{w}|_{\infty}^2V \leq 8EV. \label{to_use_L40}
\end{align}
Similarly, we have 
\begin{align*}
|\tilde{w}_{x}|_{L^4}^4 \leq |\tilde{w}_x|_{\infty}^2|\tilde{w}_x|_{L^2}^2. 
\end{align*}
Furthermore, by Agmon's inequality, note that 
\begin{align*}
|\tilde{w}_x|_{\infty}^2 \leq 2|\tilde{w}_x|_{L^2}^2+|\tilde{w}_{xx}|_{L^2}^2\leq 4E + |\tilde{w}_{xx}|_{L^2}^2. 
\end{align*}
Hence, 
\begin{align}
|\tilde{w}_x|_{L^4}^4 &\leq 4E |\tilde{w}_x|_{L^2}^2+|\tilde{w}_x|_{L^2}^2|\tilde{w}_{xx}|_{L^2}^2 \nonumber \\
&\leq 4E |\tilde{w}_x|_{L^2}^2 + 2E|\tilde{w}_{xx}|_{L^2}^2. \label{to_use_L4}
\end{align}
Combining \eqref{E_dot_1}, \eqref{to_use_L40}, and \eqref{to_use_L4}, we finally obtain 
\begin{align}
\dot{E} \leq&~ \bigg(-2\sigma + \varepsilon_6\big(E,|\bar{\alpha}(v)|_{\infty},|\bar{\alpha}(v(1))|\big) \bigg)V \nonumber \\
&~+\bigg(-a-\sigma +\varepsilon_7\big(E,|\bar{\alpha}(v)|_{\infty},|\bar{\alpha}(v(1))|\big)  \bigg)|\tilde{w}_x|_{L^2}^2 \nonumber \\
&~+ \bigg( -\frac{a}{2}+\varepsilon_8\big(E,|\bar{\alpha}(v)|_{\infty}\big)\bigg)|\tilde{w}_{xx}|_{L^2}^2. \label{E_dot_2}
\end{align}

$\bullet$ \textit{Step 5: $H^1$ Exponential Stability}
 
Using \eqref{E_dot_2}, we will first show that, for suitable initial observation errors, we have
\begin{align}
E(t) \leq E(0) \quad \forall t\geq 0, \label{E_decay} 
\end{align}
where we have used the notation $E(t):=E(\tilde{w}(\cdot,t))$. To this end, we use the fact that since $\tilde{v}_o \in \tilde{\mathcal{V}}_{o}(\omega^*)$ where $\omega^*$ verifies \eqref{omega_cond_1}-\eqref{omega_cond_3}, then 
\begin{align*}
-2\sigma + \varepsilon_6\big(E(0),\|\bar{\alpha}(v)\|_{\infty},\|\bar{\alpha}(v(1))\|_{\infty}\big) &< 0, \\
-a-\sigma +\varepsilon_7\big(E(0),\|\bar{\alpha}(v)\|_{\infty},\|\bar{\alpha}(v(1))\|_{\infty}\big) &<0, \\
-\frac{a}{2}+\varepsilon_8\big(E(0),\|\bar{\alpha}(v)\|_{\infty}\big) &< 0. 
\end{align*}
Hence, to conclude on \eqref{E_decay}, it remains to show that, for all $t>0$,
\begin{equation} \label{E_decay_proof}
\hspace{-0.3cm}\left\lbrace 
\begin{aligned}
-2\sigma + \varepsilon_6\big(E(t),\|\bar{\alpha}(v)\|_{\infty},\|\bar{\alpha}(v(1))\|_{\infty}\big) &< 0, \\
-a-\sigma +\varepsilon_7\big(E(t),\|\bar{\alpha}(v)\|_{\infty},\|\bar{\alpha}(v(1))\|_{\infty}\big) &<0, \\
-\frac{a}{2}+\varepsilon_8\big(E(t),\|\bar{\alpha}(v)\|_{\infty}\big) &< 0. 
\end{aligned}
\right. 
\end{equation}
To this end, we can proceed by contraposition. Indeed, if $E$ increases on some time interval, then we must have $\dot{E}>0$ on that time interval. However, by \eqref{E_dot_2}, this would mean that one of the inequalities in \eqref{E_decay_proof} does not hold at some time instant.

Now, combining \eqref{E_dot_2} and \eqref{E_decay}, note that we have 
\begin{align}
\dot{E} \leq - 2\sigma^* E, 
\end{align}
where
\begin{align}
\sigma^* :=&~ \frac{1}{2}\min \bigg\{-2\sigma + \varepsilon_6\big(E(0),\|\bar{\alpha}(v)\|_{\infty},\|\bar{\alpha}(v(1))\|_{\infty}\big), \nonumber \\
&~-a-\sigma +\varepsilon_7\big(E(0),\|\bar{\alpha}(v)\|_{\infty},\|\bar{\alpha}(v(1))\|_{\infty}\big)\bigg\} >0, \label{decay_rate} 
\end{align}
which implies that 
\begin{align}
|\tilde{w}(\cdot,t)|_{H^1} \leq |\tilde{w}_o|_{H^1}\exp^{-\sigma^*t} \quad \forall t\geq 0. \label{H1_w}
\end{align}
Furthermore, using \eqref{v_L2} and \eqref{vx_L2_B}, we have 
\begin{align}
|\tilde{v}|_{H^1} \leq \sqrt{M_p}|\tilde{w}|_{H^1}, 
\end{align}
where 
\begin{align}
M_p :=&~ \max\bigg\{3,2\left(1+|p|_{L^2}^2\right),3\left(\left(\frac{\sigma}{2a}\right)^2+|p_x|_{L^2}^2\right)\bigg\}.\label{mpdef}
\end{align}
Using \eqref{inv_back}, one can also show that 
\begin{align}
|\tilde{w}|_{H^1} \leq \sqrt{M_l}|\tilde{v}|_{H^1}, 
\end{align}
where 
\begin{align}
M_l :=&~ \max\bigg\{3,2\left(1+|l|_{L^2}^2\right),3\left(\left(\frac{\sigma}{2a}\right)^2+|l_x|_{L^2}^2\right)\bigg\}.\label{mldef}
\end{align}
As a result, we conclude from \eqref{H1_w} that 
\begin{align}
|\tilde{v}(\cdot,t)|_{H^1} \leq \sqrt{M_pM_l}|\tilde{v}_o|_{H^1}\exp^{-\sigma^*t} \quad \forall t\geq 0. 
\end{align}
Finally, applying Agmon's inequality stating that $|\tilde{v}|_{\infty}\leq \sqrt{2}|\tilde{v}|_{H^1}$, we obtain 
\begin{align}
|\tilde{v}(\cdot,t)|_{\infty} \leq \sqrt{2M_pM_l}|\tilde{v}_o|_{H^1}\exp^{-\sigma^*t} \quad \forall t\geq 0.
\end{align}

where 
\begin{align*}
\|\bar{\alpha}(v)\|_{\infty} &:= \sup_{(x,t) \in (0,1)\times [0,+\infty)} \{|\bar{\alpha}(v(x,t))|\}, \\
\|\bar{\alpha}(v(1))\|_{\infty} &:= \sup_{t\in [0,+\infty)}\{|\bar{\alpha}(v(1,t))|\}.
\end{align*} 
\section{Simulation Results}\label{sec_simu}
.........
\section{Conclusion and Research Perspectives}\label{sec_conclusion}

In this paper, we designed a backstepping observer for the quasilinear heat equation. The observer employs correction gains computed from a linear heat equation with constant diffusivity $a$, and we established exponential stability of the origin for the observation error dynamics in the $H^1$ norm, with an explicit estimate of the region of attraction and the convergence rate. Several directions for future research are worth pursuing. First, we believe that the perturbation-based approach developed in this work could be extended to other classes of PDEs, such as coupled parabolic--elliptic systems \cite{kirsten1}. The interesting question would then be to characterize how the region of attraction and convergence rate depend not only on the observer gains and the diffusivity mismatch, but also on the strength of the parabolic--elliptic coupling. Second, extending the present results to higher spatial dimensions is of significant practical interest. As noted in the Introduction, the main obstacle is that, in dimensions two and three, Agmon's inequality requires estimating the $H^2$ norm (rather than the $H^1$ norm) of the observation error in order to control its max norm. Third, our numerical experiments indicate that the theoretical bounds, while qualitatively faithful, are quantitatively conservative. This conservatism stems in part from the choice of the simplest possible Lyapunov functional candidate: the flat $H^1$ norm squared. It would be interesting to investigate whether finer choices of Lyapunov functional candidates could yield tighter estimates of the region of attraction and the effective convergence rate.


\appendix

\section{Definition of constants and functions}\label{app_functions}

Let $L_{\alpha}(s)$ (resp., $L_{\alpha'}(s)$), for some $s\geq 0$, be the Lipschitz constant of $\alpha$ (resp., of $\alpha'$) on $[-s,s]$. We recall that $\alpha$ and $\alpha'$ are Lipschitz on bounded sets because $\alpha$ is of class $\mathcal{C}^2$.

Furthermore, we let $l:\mathcal{T}\to \mathbb{R}$ be the inverse kernel
\begin{align*}
l(x,y) = p(x,y)+\int_{y}^xp(x,\xi)l(\xi,y)d\xi.
\end{align*}

Next, we introduce the constants
\begin{align*}
\gamma_1 :=&~ 3\left(1+\frac{\sigma}{2a}\right)\left(\left(\frac{\sigma}{2a}\right)^2+|p_x|_{L^2}^2\right)+\frac{\sigma}{2a} \\
&~+\sup_{x\in (0,1)}\{|l_y(x,x)|\}\left(1+2\left(1+|p|_{L^2}^2\right)\right) \\
&~+2\sqrt{2}|l_{yy}|_{L^2}\sqrt{1+|p|_{L^2}^2}, \\
\gamma_2 :=&~ \left(\frac{\sigma}{2a}+\left(\frac{2a+\sigma}{a}\right)\left(1+|p|_{L^2}^2\right)\right)\|v_x\|_{\infty}, \\
\gamma_3 :=&~ \frac{2\sigma}{a}+3|l_y(\cdot,1)|_{L^2}+\frac{3\sigma}{2a}|l(\cdot,1)|_{L^2},\\
\gamma_4 :=&~ \frac{8a+3\sigma}{4a}, \quad \gamma_5 := \frac{1}{2}\|v_x\|_{\infty}, \\
\gamma_6 :=&~ \frac{\sigma}{2a}+|l_y(\cdot,1)|_{L^2}+\frac{9\sigma}{2a}|l(\cdot,1)|_{L^2}, \\
\gamma_7 :=&~ \frac{8}{a}\left(1+|l|_{L^2}^2\right), \ \ \gamma_8 :=\frac{2\sigma^2}{a^3}\left(1+|l|_{L^2}^2\right), \\
\gamma_9 :=&~\frac{1}{a}\left(\frac{\sigma^2}{a^2}+24\|v_x\|_{\infty}^2\right)\left(1+|l|_{L^2}^2\right), \\
\gamma_{10} :=&~ \frac{2}{a}\left(\frac{1}{4}\left(\frac{\sigma}{a}\right)^4+4|p_x|_{L^4}^4\right)\left(1+|l|_{L^2}^2\right), \\
\gamma_{11} :=&~ \frac{32}{a}\left(1+|l|_{L^2}^2\right), \\
\gamma_{12} :=&~ \frac{16}{a}\left(1+|p|_{L^2}^2\right)\left(1+|l|_{L^2}^2\right)\|v_{xx}\|_{\infty}^2, \\
\gamma_{13} :=&~ \frac{32}{a}\left(1+|p|_{L^2}^2\right)\left(1+|l|_{L^2}^2\right)\|v_{x}\|_{\infty}^4, \\
\gamma_{14} :=&~ \frac{8}{a}\bigg(|p_{xx}|_{L^2}^2+\sup_{x\in (0,1)}\left\{\left(p_x(x,x)-\frac{\sigma}{2a}\right)^2\right\} \\
&~+3\left(\frac{\sigma^2}{2a^2}+2|p_x|_{L^2}^2\right)\|v_x\|_{\infty}^2\bigg)\left(1+|l|_{L^2}^2\right), \\
\gamma_{15} :=&~ \frac{16}{a}\bigg(|p_{xx}|_{L^2}^2+\sup_{x\in (0,1)}\left\{\left(p_x(x,x)-\frac{\sigma}{2a}\right)^2\right\}\bigg)\\
&~\times \left(1+|l|_{L^2}^2\right),
\end{align*}
where 
\begin{align*}
|p|_{L^2}^2 &:= \int_{0}^{1}\int_{x}^{1}p(x,y)^2dydx, \\
|l_y(\cdot,1)|_{L^2}^2 &:= \int_{0}^{1}l_y(x,1)^2dx,
\end{align*}
and similar notation is used for the other kernel norms. We also introduce the functions
\begin{align*}
\varepsilon_1 &:= \left\{2\gamma_1\left(1+|p|_{\infty}\right)\sqrt{E}+\gamma_2\right\}L_{\alpha}\left(2\left(1+|p|_{\infty}\right)\sqrt{E}\right) \nonumber \\
&~+2\gamma_3L_{\alpha}\left(2\sqrt{E}\right)\sqrt{E}+\gamma_{13}L_{\alpha'}\left(2\left(1+|p|_{\infty}\right)\sqrt{E}\right)^2 \nonumber \\
&~+\left\{\gamma_{12}+4\gamma_{15}\left(1+|p|_{\infty}\right)^2E\right\}L_{\alpha}\left(2\left(1+|p|_{\infty}\right)\sqrt{E}\right)^2, \\
\varepsilon_2 &:= 16\gamma_{10}L_{\alpha'}\left(2\left(1+|p|_{\infty}\right)\sqrt{E}\right)^2\left(1+|p|_{\infty}\right)^2E, \\
\varepsilon_3 &:= \left\{2\gamma_4\left(1+|p|_{\infty}\right)\sqrt{E}+\gamma_5\right\}L_{\alpha}\left(2\left(1+|p|_{\infty}\right)\sqrt{E}\right)
\end{align*}
\begin{align*}
&~+4\gamma_8L_{\alpha}\left(2\left(1+|p|_{\infty}\right)\sqrt{E}\right)^2\left(1+|p|_{\infty}\right)^2E \\
&~+2\gamma_6L_{\alpha}\left(2\sqrt{E}\right)\sqrt{E},\\
\varepsilon_4 &:= 4\gamma_{11}L_{\alpha'}\left(2\left(1+|p|_{\infty}\right)\sqrt{E}\right)^2\left(1+|p|_{\infty}\right)^2E, \\
\varepsilon_5 &:= 4\gamma_{7}L_{\alpha}\left(2\left(1+|p|_{\infty}\right)\sqrt{E}\right)^2\left(1+|p|_{\infty}\right)^2E, \\
\varepsilon_6 &:= \varepsilon_1(E) + \gamma_3\delta_2+\gamma_1\delta_1+\gamma_{14}\delta_1^2  \\
&~+ 8\left(\varepsilon_2(E)+\gamma_{10}\delta_1^2\right)E, \\
\varepsilon_7 &:=  \varepsilon_3(E)+\gamma_6 \delta_2+\gamma_4\delta_1+\gamma_9\delta_1^2 \\
&~+4\left(\varepsilon_4(E)+\frac{\gamma_{11}}{4}\delta_1^2\right)E, \\
\varepsilon_8 &:= \varepsilon_5(E)+\frac{\gamma_7}{2}\delta_1^2 + 2\left(\varepsilon_4(E)+\frac{\gamma_{11}}{4}\delta_1^2\right)E.
\end{align*}
Note that the functions $\{\varepsilon_i\}_{i=1}^{8}$ are nonnegative, continuous, and nondecreasing in $E$, $\delta_1$, and $\delta_2$.

\par\noindent 
\parbox[t]{\linewidth}{
\noindent\parpic{\includegraphics[height=1.5in,width=1in,clip,keepaspectratio]{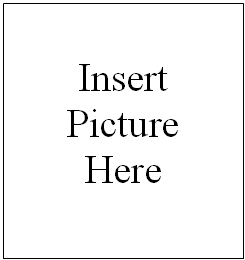}}
\noindent {\bf Mohamed Camil Belhadjoudja}\ research interests include control and estimation of infinite-dimensional systems.
He is a Postdoc in the Applied
Mathematics Department at the University of Waterloo.}

\vspace{1cm}

\par\noindent 
\parbox[t]{\linewidth}{
\noindent\parpic{\includegraphics[height=2in,width=1in,clip,keepaspectratio]{insert-picture-here.jpg}}
\noindent {\bf Kirsten A. Morris}\ research interests are control and
estimation of systems modeled by partial differential
equations and also systems, such as smart materials,
involving hysteresis. Her recent research has focused
on improving performance through attention to actuator location as part of controller design, and sensor
location as part of estimator design. She is a professor in the Applied
Mathematics Department at the University of Waterloo.}

\end{document}